\newcommand{\eps}{\varepsilon}
\newcommand{\R}{\mathbb{R}}
\newcommand{\C}{\mathbb{C}}
\newcommand{\N}{\mathbb{N}}
\newcommand{\uinf}{0}
\newcommand{\D}{\partial}
\newcommand{\ham}{\bar{H}_l}
\renewcommand{\Im}{\text{Im }}
\renewcommand{\Re}{\text{Re}}
\newcommand{\etaa}{\eta_{1,\omega}}
\newcommand{\etab}{\eta_{2,\omega}}
\newcommand{\zetaa}{\zeta_{1,\omega}}
\newcommand{\zetab}{\zeta_{2,\omega}}
\newcommand{\gammaa}{\gamma_{1,\omega}}
\newcommand{\gammab}{\gamma_{1,\omega}}
\newcommand{\Gammaa}{\Gamma^a_{\omega}}
\newcommand{\Gammab}{\Gamma^b_{\omega}}
\newtheorem{thm}{Theorem}[section]
\newtheorem{prop}{Proposition}[section]
\newtheorem{cor}{Corollary}[section]
\numberwithin{equation}{section}
\author{Matthew Masarik\\
University of Michigan\\
\texttt{masarikm@umich.edu}}
\title{The Wave Equation in a General Spherically Symmetric Black Hole Geometry}
\date{\today}
\begin{document}
\maketitle
\begin{abstract}We consider the Cauchy problem for the wave equation in a general class of spherically symmetric black hole geometries. Under certain mild conditions on the far-field decay and the singularity, we show that there is a unique globally smooth solution to the Cauchy problem for the wave equation with data compactly supported away from the horizon that is compactly supported for all times and \emph{decays in $L^{\infty}_{\text{loc}}$ as $t$ tends to infinity}. We obtain as a corollary that in the geometry of black hole solutions of the SU(2) Einstein/Yang-Mills equations, solutions to the wave equation with compactly supported initial data decay as $t$ goes to infinity.\end{abstract}
\section{Introduction}The Cauchy problem for the wave equation in various black hole geometries is an active area of research with much effort devoted to showing decay of the solutions. In the case of the Schwarzschild metric, Kronthaler showed in \cite{thecauchyproblemforthewaveequationintheschwarzschildgeometry} that there exists a unique global solution to this problem and, moreover, the solution decays pointwise as $t \to \infty$. In \cite{onpointwisedecayoflinearwavesonaSchwarzschildblackholebackground} Donninger, Schlag, and Soffer obtain the specific decay \emph{rate} $t^{-3}$ for solutions; and in \cite{decayratesforsphericalscalarwavesintheSchwarzschildgeometry} Kronthaler obtains the same rate under the assumption that the data is spherically symmetric (i.e. for the first angular mode of the full solution), along with the additional result that if the data is momentarily static (i.e., $\left.\D_t\phi\right|_{(x,0)} = 0$), then the decay rate can be improved to $t^{-4}$ (again for the first angular mode). For the Kerr metric (without a smallness restriction to the angular momentum), Finster, Kamran, Smoller, and Yau showed decay of solutions of the wave equation in \cite{decayofsolutionsofthewaveequationinthekerrgeometry}. If the case of sufficiently small angular momentum, Dafermos and Rodnianski were able to demonstrate the uniform boundedness of solutions to the wave equation in \cite{aproofoftheuniformboundednessofsolutionstothewaveequationsonslowlyrotatingkerrbackgrounds}, and Andersson and Blue obtained decay \emph{rates} in \cite{hiddensymmetriesanddecayforthewaveequationonthekerrspacetime}, again for sufficiently small angular momentum.

Our goal in this paper is to generalize decay results in the Schwarzschild metric to a much more general class of spherically symmetric black hole geometries. We consider a metric given by
\begin{equation}ds^2 = g_{ij}dx^idx^j = -T^{-2}(r)dt^2 + K^2(r)dr^2 + r^2(d\theta^2 + \sin^2\theta d\phi^2),\label{metric}\end{equation}
where $r>0, 0\leq \theta\leq \pi, 0\leq \phi\leq 2\pi$. This clearly generalizes the Schwarzschild geometry, as can be seen by making the identifications
\begin{equation*}T(r) = \left(1 - \frac{2m}{r}\right)^{-\frac{1}{2}} = K(r).\end{equation*}
We must impose natural (i.e. physical) conditions on the coefficients $T,K$. To that end, we assume there is a singularity in $K$ at $r = r_0>0$ and near the singularity we have the following asymptotics: 
\begin{equation}T(r) = c_1(r - r_0)^{-\frac{1}{2}} + O(1) \text{ and } K(r) = c_2(r - r_0)^{-\frac{1}{2}} + O(1)\label{eventhorizonasymptotics1}\end{equation}
for some constants $c_1,c_2>0$, and 
\begin{equation}T'(r) = c_3(r-r_0)^{-\frac{3}{2}} + O(r-r_0)^{-\frac{1}{2}} \text{ and }K'(r) = c_4(r - r_0)^{-\frac{3}{2}} + O(r-r_0)^{-\frac{1}{2}}\label{eventhorizonasymptotics2}\end{equation}
for some nonzero constants $c_3,c_4$. We assume smoothness away from the horizon: $T,K \in C^{\infty}(r_0,\infty)$; and we assume that in the far-field, the metric asymptotically approaches Minkowski flat-space: 
\begin{equation}T(r) = 1 + O\left(\frac{1}{r}\right) \text{ and }K(r) = 1 + O\left(\frac{1}{r}\right)  \text{ as } r \text{ tends to infinity}.\label{farfieldasymptotics1}\end{equation}
We assume that for each $r \in (r_0,\infty)$, $T(r) \neq 0$ and $K(r)\neq 0$; and finally, we impose restrictions on the far-field decay: 
\begin{equation}\frac{T'(r)}{T(r)} + \frac{K'(r)}{K(r)} = O\left(\frac{1}{r^2}\right) \text{ as } r \text{ tends to infinity}.\label{farfieldasymptotics2}\end{equation}
We note that the Schwarzschild metric, the non-extreme Reissner-Nordstrom metric, and the metrics given by black hole solutions of the Einstein/Yang-Mills (EYM) equations (c.f. \cite{existenceofblackholesolutionsfortheeymequations}) satisfy these conditions. It is easy to see the first two cases, and we will show in Section~\ref{eym} that black hole solutions to the EYM equations satisfy these conditions. 
For the purposes of this paper, a geometry $\eqref{metric}$ satisfying the above conditions will be referred to as a \emph{spherically symmetric black hole} (SSBH). We note also that the work \cite{thecauchyproblemforthewaveequationintheschwarzschildgeometry} served as a model for solving this problem and we also rely on results therein in a few places.

\section{Preliminary Notions}We begin by recalling that, given a spacetime metric $g$, the scalar wave equation in that geometry is given by
\begin{equation} \square \zeta := g^{ij}\nabla_i\nabla_j \zeta = \frac{1}{\sqrt{-g}}\frac{\partial}{\partial x^i}\left(\sqrt{-g}\,g^{ij}\,\frac{\partial}{\partial x^j}\right)\zeta = 0,\label{defofwaveop}\end{equation} 
where $g^{ij}$ is the inverse of the metric $g_{ij}$ and $g = \det(g_{ij})$. So, in the geometry $\eqref{metric}$, the wave equation takes the form 
\begin{equation}\square \zeta= \left(-T^2 \partial_t^2 + \frac{1}{r^2}\D_r\left(\frac{r^2\partial_r}{K^2}\right) + \frac{T}{K^3}\partial_r \left(\frac{K}{T}\right)\partial_r + \frac{1}{r^2}\Delta_{S^2}\right)\zeta = 0,\label{waveequation1}\end{equation}
where $T=T(r),K= K(r),$ and $\Delta_{S^2} = \frac{\partial}{\partial(\cos\theta)}\left(\sin^2\theta \frac{\partial}{\partial(\cos \theta)}\right) + \frac{1}{\sin^2\theta}\frac{\partial^2}{\partial \phi^2}$ is the standard Laplacian on the sphere $S^2$. We introduce the coordinate $u = u(r)$ by 
\begin{equation}u(r) = -\int_{r}^{\infty} \frac{K(\alpha)T(\alpha)}{\alpha^2} d\alpha,\label{defofu}\end{equation}
which maps the interval $(r_0,\infty)$ to the interval $(-\infty,0)$. This is a simple consequence of the prescribed asymptotics. We note also that since $T,K$ are everywhere positive, $u$ is indeed a valid coordinate change and the inverse mapping $r = r(u)$ is well-defined. Then the wave equation $\eqref{waveequation1}$ on $\R \times (r_0, \infty) \times S^2$ is equivalent to 
\begin{equation}\left(-r^4 \partial_t^2 + \partial_u^2 + \frac{r^2}{T^2}\Delta_{S^2}\right)\psi = 0\label{waveequation2}\end{equation}
on $\R \times (-\infty,0) \times S^2$, where $T = T(r)$ and $r = r(u)$ (in what follows, we will frequently suppress the arguments of these functions). The Cauchy problem for the wave equation in the coordinates $(t,u,\theta,\phi)$ then reads\footnote{We use the compact form $(\psi,i\psi_t)$ for the data in what follows, since this is most convenient when we reformulate this as a Hamiltonian problem later.}
\begin{equation}\begin{cases} \left(-r^4(u) \partial_t^2 + \partial_u^2 + \frac{r^2(u)}{T^2\left(r(u)\right)}\Delta_{S^2}\right)\psi(t,u,\theta,\phi) = 0 \text{ on } \R \times (-\infty, \uinf) \times S^2,\\
(\psi, i\psi_t)(0,u,\theta,\phi) = \Psi_0(u,\theta,\phi) \in C_0^{\infty}\left((-\infty,\uinf) \times S^2\right)^2.\end{cases}\label{cp}\end{equation}

Let us settle first the question of existence and uniqueness for the problem $\eqref{cp}$.

\begin{thm} The Cauchy problem $\eqref{cp}$ in the geometry of an SSBH has a unique, smooth solution that exists for all times $t$. Furthermore, this solution is compactly supported in $(u,\theta,\phi)$ for each time $t$.\label{existenceanduniqueness}\end{thm}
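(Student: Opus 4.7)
My plan is to reformulate $\eqref{cp}$ as a first-order Hamiltonian evolution equation on a weighted Hilbert space, establish essential self-adjointness of the generator, and invoke Stone's theorem to produce a unique global unitary flow; smoothness and compact support will then follow from a standard bootstrap and a finite-propagation-speed argument, respectively.

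Setting $\Psi = (\psi, i\psi_t)$ as the footnote suggests, the wave equation $\eqref{waveequation2}$ becomes $i\partial_t \Psi = H\Psi$, where
$$H = \begin{pmatrix} 0 & 1 \\ -L & 0 \end{pmatrix}, \qquad L = \frac{1}{r^4}\left(\partial_u^2 + \frac{r^2}{T^2}\Delta_{S^2}\right).$$
Multiplication of $\eqref{waveequation2}$ by $\bar\psi_t$ followed by integration by parts gives the formally conserved energy
$$E[\Psi] = \int_{-\infty}^0\!\!\int_{S^2}\!\left(r^4|\psi_t|^2 + |\psi_u|^2 + \frac{r^2}{T^2}|\nabla_{S^2}\psi|^2\right)du\,d\Omega,$$
and I would take $\mathcal{H}$ to be the completion of $C_0^\infty((-\infty,0)\times S^2)^2$ in the corresponding norm. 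A direct computation, using that $\partial_u^2 + (r^2/T^2)\Delta_{S^2}$ is symmetric on $L^2(du\,d\Omega)$, shows that $H$ is symmetric on smooth compactly supported data.

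To upgrade symmetry to essential self-adjointness, I would decompose along spherical harmonics, $\psi = \sum_{l,m}\psi_{l,m}(t,u)Y_{l,m}(\theta,\phi)$, thereby reducing $L$ to a direct sum of one-dimensional Sturm--Liouville operators
$$L_l = \frac{1}{r^4(u)}\left(\partial_u^2 - \frac{l(l+1)r^2(u)}{T^2(r(u))}\right).$$
The main technical step is then verifying that each $L_l$ is in the Weyl limit-point case at both $u\to -\infty$ (the horizon) and $u\to 0^-$ (spatial infinity). Near $u=-\infty$ one combines $\eqref{eventhorizonasymptotics1}$, $\eqref{eventhorizonasymptotics2}$, and $\eqref{defofu}$ to deduce $r(u)\to r_0$ exponentially, which keeps the centrifugal potential bounded; near $u=0$ the far-field asymptotics $\eqref{farfieldasymptotics1}$ and $\eqref{farfieldasymptotics2}$ give $r(u)\sim -1/u$ and make the potential a short-range perturbation of the one-dimensional Laplacian. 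I expect this endpoint analysis to be the principal obstacle, since it is where the SSBH structural conditions must pay for themselves; the Schwarzschild treatment in \cite{thecauchyproblemforthewaveequationintheschwarzschildgeometry} should provide a direct template.

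With essential self-adjointness in hand, Stone's theorem produces a strongly continuous unitary group $e^{-iHt}$ on $\mathcal{H}$ and hence a unique finite-energy solution $\Psi(t) = e^{-iHt}\Psi_0$ for all $t\in\R$. Smoothness follows by observing that $H^k\Psi_0\in\mathcal{H}$ for every $k\in\N$ (since $\Psi_0$ is smooth and compactly supported strictly inside $(-\infty,0)\times S^2$), so $H^k\Psi(t)\in\mathcal{H}$ for every $t$ and $k$, and then invoking interior elliptic regularity for the spatial part of $L$ to conclude $\psi\in C^\infty(\R\times(-\infty,0)\times S^2)$. Finally, compact support in $(u,\theta,\phi)$ at each fixed $t$ follows from finite propagation speed: on any compact subinterval of $(-\infty,0)$ the coefficients $r(u)$ and $T(r(u))$ are smooth and uniformly positive, so a standard backward-cone energy estimate confines the support of $\psi(t,\cdot)$ to a bounded region strictly inside $(-\infty,0)\times S^2$ for every finite $t$.
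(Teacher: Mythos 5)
Your proposal takes a genuinely different route from the paper, and the difference matters for where the burden of proof falls. The paper does \emph{not} prove Theorem~\ref{existenceanduniqueness} by spectral theory: it changes variables to $s=s(u)$ as in \eqref{scoord} and sets $\xi=r\psi$, so that the equation becomes \eqref{auxpde}, a wave equation with principal part $\partial_t^2-\partial_s^2$ and bounded zeroth-order coefficient on all of $\R\times S^2$; it then writes this as a first-order symmetric hyperbolic system in local charts, gets local existence, uniqueness, smoothness and finite propagation speed from the standard theory, and iterates in time to obtain the global compactly supported solution. Essential self-adjointness of $H_l$ (Proposition~3.1) is then \emph{deduced from} this existence theorem via Stone's theorem applied to the resulting unitary group. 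In your logical order the implication runs the other way, so the self-adjointness you want to invoke is not an available input here --- you must prove it independently, and that is precisely the step you defer as ``the principal obstacle.'' As it stands, the core of your proof is missing.

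Two concrete issues with filling that gap. First, the Weyl limit-point/limit-circle classification applies to the scalar Sturm--Liouville operator $A_l=-r^{-4}\partial_u^2+l(l+1)/(r^2T^2)$ on the weighted space $L^2(r^4\,du)$; essential self-adjointness of the block Hamiltonian $H_l$ on the energy space $\mathscr{H}^1_{V_l,0}\otimes\mathscr{H}_{r^2,0}$ does not follow formally from that of $A_l$ --- one needs a density argument for $\mathrm{Ran}(H_l\pm i)$ in the energy norm or the functional calculus for $A_l^{1/2}$, and for $l=0$ the first-component norm is only the homogeneous $\dot H^1$ seminorm, so the completion and the domain bookkeeping require care. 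Second, near $u=0^-$ the operator degenerates ($r^4\sim u^{-4}$), so describing the potential as a short-range perturbation of the one-dimensional Laplacian already presupposes the very change of variables to $s$ that the paper makes; you would need to pass to \eqref{evalueequation2} explicitly and verify from the SSBH hypotheses that $W_l(s)=O(s^{-2})$ as $s\to+\infty$ and is exponentially small as $s\to-\infty$ before concluding limit point at both ends. By contrast, your concluding steps are sound \emph{given} a smooth solution: the bootstrap via $H^k\Psi_0\in\mathscr{H}$ plus interior elliptic regularity gives smoothness, and since the characteristic speed $r^{-2}(u)$ is bounded near the horizon and vanishes quadratically as $u\to 0^-$, the backward-cone estimate does confine the support to a compact subset of $(-\infty,0)\times S^2$ --- this is in fact a cleaner way to see the compact-support claim than the paper's remark that it is ``not obvious'' in the $u$ variable, but it cannot substitute for the unproven self-adjointness on which your whole existence argument rests.
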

\begin{proof}To prove the theorem, we wish to apply the theory of symmetric hyperbolic systems in section 5.3 of \cite{fritz} to the auxiliary PDE 
\begin{equation}\left(\D_t^2 - \D_s^2 - \frac{\Delta_{S^2}}{r^2 T^2} - \frac{1}{T^2K^2r}\left(\frac{T'}{T} + \frac{K'}{K}\right)\right)\xi(t,s,\theta,\phi) = 0,\label{auxpde}\end{equation}
where 
\begin{equation}s(u) = \int_{u(2r_0)}^u r^2(\alpha)d\alpha,\label{scoord}\end{equation}
$r = r(u(s))$, and the arguments of $T,T',K,K'$ are also $r(u(s))$. Note that we may consider $s$ as a function of $r$ by considering
\begin{align}s(u(r)) &= \int_{u(2r_0)}^{u(r)}r^2(\alpha)d\alpha\notag \\ &= \int_{2r_0}^r T(\alpha)K(\alpha)d\alpha.\label{s(r)}\end{align}
The PDE $\eqref{auxpde}$ is equivalent to $\eqref{waveequation2}$ upon making the change of coordinate $s = s(u)$ and letting $\xi = r\psi$. We consider this PDE because in this coordinate we will be able to prove that the solution of $\eqref{cp}$ is compactly supported for each $t$. In the $u$ variable, this is not obvious, and it is not clear how to prove it. Let us also note that $s(u)$ maps the interval $(-\infty,\uinf)$ monotonically onto $\R$. We will prove the theorem first for the Cauchy problem
\begin{equation}\begin{cases}\left(\D_t^2 - \D_s^2 - \frac{\Delta_{S^2}}{r^2 T^2} - \frac{1}{T^2K^2r}\left(\frac{T'}{T} + \frac{K'}{K}\right)\right)\xi = 0 \text{ on } \R \times \R \times S^2,\\
(\xi,i\xi_t)(0,s,\theta,\phi) = \Xi_0(s,\theta,\phi) \in C_0^{\infty}(\R \times S^2)^2,\end{cases}\label{cp4}\end{equation}
and then use this to obtain results about the Cauchy problem $\eqref{cp}$. We note here that the argument that follows is based on a similar argument in \cite{thecauchyproblemforthewaveequationintheschwarzschildgeometry}. To prove the theorem, we must work in local coordinates on $S^2$. So let us consider the chart $(U,(\theta,\phi))$ where $U$ is an open, relatively compact subset of $S^2$ and $(\theta,\phi)$ are well-defined on $\bar{U}$. Then, letting $\Gamma = (\xi_t, \xi_u, \partial_{(\cos\theta)}\xi, \partial_{\phi}\xi,\xi)^T$, we can cast the PDE in $\eqref{cp4}$ as a first-order system:
\begin{equation}A_0 \partial_t \Gamma + A_1 \partial_u \Gamma + A_2 \D_{(\cos\theta)} \Gamma + A_3 \D_\phi \Gamma +B\Gamma= 0,\label{symmhypsystem}\end{equation}
where the matrices $A_0, \dots, A_3,B$ are defined as follows:
\[ A_0 := \text{diag}\left(1, 1, \frac{\sin^2\theta}{r^2T^2}, \frac{1}{r^2T^2}\frac{1}{\sin^2 \theta},1\right), (A_1)_{12} = -1 = (A_1)_{21}, \]
\[(A_2)_{13} = -\frac{\sin^2\theta}{r^2T^2} = (A_2)_{31}, (A_3)_{14} = -\frac{1}{r^2T^2}\frac{1}{\sin^2\theta} = (A_3)_{41},\]
\[B_{13} = \frac{2\cos\theta}{r^2T^2}, B_{15} = -\frac{1}{T^2K^2r}\left(\frac{T'}{T} + \frac{K'}{K}\right), B_{51} = -1,\]
and all other matrix entries are zero. Upon multiplying this system by $T^2$, we obtain a symmetric hyperbolic system on $\R \times \R \times U$, since each $A_i$ is symmetric and $A_0$ is uniformly positive definite on this region. Further, since the initial data $\Xi_0$ has compact support, we can restrict the system to $\R \times V \times U$, where $V$ is open, relatively compact, and $\text{supp}\, \Xi_0(u,\theta,\phi) \subset V \subset \R$ for each $(\theta,\phi) \in S^2$. Since we can cover $S^2$ by finitely many such charts, the theory of symmetric hyperbolic systems guarantees the existence and uniqueness of a smooth solution $\xi$ of $\eqref{cp4}$ defined for all $t < \eps_1$ for some $\eps_1 > 0$. Moreover, this solution propagates with finite speed and thus there exists an $0 < \eps \leq \eps_1$ so that $\xi$ has compact support in $V\times S^2$ for all times $t\leq \eps$. Therefore, we can repeat this argument for the Cauchy problem with data $(\xi(\eps, u, \theta, \phi), i\xi_t(\eps, u, \theta, \phi))^T$ and obtain a unique, smooth solution $\xi$ defined for $t\leq 2\eps$ which has compact support in a possibly larger, though still open and relatively compact set for all times $t\leq 2\eps$. Repeating the argument yields a global solution $\xi$ of the Cauchy problem $\eqref{cp4}$ which is smooth, unique, and compactly supported for all times $t$.

Since we have already observed that a solution $\psi$ of $\eqref{cp}$ yields a solution of $\eqref{cp4}$ under the coordinate change $s = s(u)$ and the identification $\xi = r\psi$ and vice versa, the theorem follows. 
\end{proof}

We now observe that the Cauchy problem admits a conserved energy:

\begin{prop}A solution of the Cauchy problem $\eqref{cp}$ admits a conserved energy $E(\psi)$ given by
\begin{equation}E(\psi) = \int_0^{2\pi}\!\!\!\int_{-1}^1 \!\int_{-\infty}^{\uinf} \!\!r^4(u) \left(\psi_t\right)^2 + \left(\psi_u\right)^2 + \frac{r^2(u)}{T^2(r(u))}\left(\frac{1}{\sin^2\theta}\left(\D_{\phi}\psi\right)^2 + \sin^2\theta\left(\D_{(\cos\theta)}\psi\right)^2\right)dud(\cos\theta)d\phi;\label{energy}\end{equation}
i.e. $\frac{d}{dt}E(\psi) = 0$.\end{prop}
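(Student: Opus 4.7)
The plan is to differentiate $E(\psi)$ in $t$ under the integral sign, use the wave equation \eqref{waveequation2} to eliminate $\psi_{tt}$, and then redistribute derivatives via integration by parts so that all remaining contributions cancel in pairs. The compact support of $\psi$ in $(u,\theta,\phi)$ established in Theorem~\ref{existenceanduniqueness} is what makes all boundary terms vanish and justifies interchanging $\D_t$ with the integral; the coefficients $r(u)$ and $T(r(u))$ depend only on $u$, so $\D_t$ passes through them trivially.

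Carrying out the differentiation yields
\begin{equation*}
\frac{dE}{dt} = \int \!\int \!\int \Bigl( 2 r^4 \psi_t \psi_{tt} + 2\psi_u \psi_{ut} + \frac{2r^2}{T^2}\Bigl[\sin^2\theta\, \D_{(\cos\theta)}\psi\, \D_{(\cos\theta)}\psi_t + \tfrac{1}{\sin^2\theta}\,\D_{\phi}\psi\, \D_{\phi}\psi_t\Bigr]\Bigr)\,du\,d(\cos\theta)\,d\phi.
\end{equation*}
Using \eqref{waveequation2} to substitute $r^4 \psi_{tt} = \psi_{uu} + \tfrac{r^2}{T^2}\Delta_{S^2}\psi$ splits the first term into a radial and an angular piece. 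The radial piece combines with $2\psi_u\psi_{ut}$ to form the perfect $u$-derivative $2\D_u(\psi_t\psi_u)$, which integrates to zero over $(-\infty,\uinf)$ because $\psi(t,\cdot,\theta,\phi)$ is supported in a compact subset of $(-\infty,\uinf)$ by Theorem~\ref{existenceanduniqueness}.

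The remaining angular contribution is handled via Green's identity on $S^2$ with the measure $d(\cos\theta)\,d\phi$:
\begin{equation*}
\int_{S^2} \psi_t \,\Delta_{S^2}\psi\, d(\cos\theta)\,d\phi = -\int_{S^2}\Bigl[\sin^2\theta\,\D_{(\cos\theta)}\psi_t\, \D_{(\cos\theta)}\psi + \tfrac{1}{\sin^2\theta}\D_{\phi}\psi_t\, \D_{\phi}\psi \Bigr] d(\cos\theta)\, d\phi.
\end{equation*}
Here the $\phi$ boundary terms vanish by $2\pi$-periodicity, and the boundary terms at $\cos\theta=\pm 1$ vanish because $\sin^2\theta \to 0$ while smoothness of $\psi$ on $S^2$ keeps $\D_{(\cos\theta)}\psi_t$ and $\D_{(\cos\theta)}\psi$ under control. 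Since $\D_t$ commutes with $\D_{(\cos\theta)}$ and $\D_\phi$, this identity shows the angular contribution from $2r^4\psi_t\psi_{tt}$ exactly cancels the angular gradient-type terms appearing in $\D_t E(\psi)$.

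The only genuine obstacle is the bookkeeping around the spherical IBP at the polar singularities of the $(\theta,\phi)$ chart; this is dealt with by the standard device of writing $\int_{S^2}$ as a sum over finitely many regular charts (as in the proof of Theorem~\ref{existenceanduniqueness}) and recalling that $\psi$ is smooth on $S^2$, so no pole terms survive. Combining the two cancellations gives $\tfrac{d}{dt}E(\psi)=0$, completing the proof.
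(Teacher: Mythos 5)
Your proof is correct and is precisely the ``easy calculation'' the paper leaves implicit: differentiate under the integral (justified by the compact support from Theorem~\ref{existenceanduniqueness}), substitute the wave equation \eqref{waveequation2} for $r^4\psi_{tt}$, and cancel via integration by parts in $u$ and on $S^2$. This matches the paper's approach; you have simply written out the details it omits.
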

\begin{proof}We know that $\eqref{cp}$ admits a globally defined, smooth, unique solution which is compactly supported for all times $t$. Thus, the energy $E(\psi)$ is well-defined. Moreover, since $\psi$ solves $\eqref{cp}$, an easy calculation shows that $\frac{d}{dt}E(\psi) = 0$.
\end{proof}

Next, we wish to cast the Cauchy problem $\eqref{cp}$ as a first-order Hamiltonian system. To this end we define $\Psi := (\psi, i\psi_t)^T$; then $i\D_t \Psi = H\Psi$, where 
\[H = \left(\begin{array}{cc} 0 & 1 \\ A & 0\end{array}\right),\]
and $A = -\frac{1}{r^4}\D_u^2 - \frac{\Delta_{S^2}}{r^2 T^2}$. Therefore, the Cauchy problem $\eqref{cp}$ is equivalent to the problem
\begin{equation}\begin{cases} i\D_t \Psi = H\Psi \text{ on } \R \times (-\infty,\uinf) \times S^2, \\ \Psi(0,u,\theta,\phi) = \Phi_0(u,\theta,\phi) \in C_0^{\infty}\left((-\infty,\uinf) \times S^2\right)^2.\end{cases}\label{cp2}\end{equation}
Theorem \ref{existenceanduniqueness} implies that the problem $\eqref{cp2}$ has a unique, smooth solution $\Psi$ that is defined for all times $t$ and compactly supported for each $t$.

Let us next observe that the energy in $\eqref{energy}$ defines an inner product on the space $C_0^{\infty}\left((-\infty,\uinf)\times S^2\right)^2$: for $\Psi, \Gamma \in C_0^{\infty}\left((-\infty,\uinf)\times S^2\right)^2$ with $\Psi = (\psi_1,\psi_2)^T$ and $\Gamma = (\gamma_1, \gamma_2)^T$, we can define the scalar product $\langle \Psi, \Gamma\rangle$ by
\begin{equation}\int_0^{2\pi}\!\!\!\int_{-1}^1\!\int_{-\infty}^{\uinf} \!\!r^4 \psi_2 \overline{\gamma_2} + (\D_u\psi_1)\overline{(\D_u\Gamma_1)} + \frac{r^2}{T^2}\left(\frac{1}{\sin^2\theta}(\D_{\phi} \psi_1)\overline{(\D_{\phi}\gamma_1)} + \sin^2\theta(\D_{(\cos\theta)}\psi_1)\overline{(\D_{(\cos\theta)}\gamma_1)}\right)dud(\cos\theta)d\phi.\label{innerproduct1}\end{equation}
We next show that with respect to this inner product, $H$ is symmetric on the domain $C_0^{\infty}\left((-\infty,\uinf)\times S^2\right)^2$.
\begin{prop}The operator $H$ is symmetric with respect to the inner product $\langle \cdot, \cdot\rangle$ on the domain $C_0^{\infty}\left((-\infty,\uinf)\times S^2\right)^2$.\end{prop}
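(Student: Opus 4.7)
The plan is to compute $\langle H\Psi,\Gamma\rangle$ and $\langle \Psi, H\Gamma\rangle$ directly from the definition and verify that they agree via integration by parts in $u$ and in the angular variables. Write $\Psi = (\psi_1,\psi_2)^T$, $\Gamma = (\gamma_1,\gamma_2)^T$, so that $H\Psi = (\psi_2,A\psi_1)^T$ with $A = -\tfrac{1}{r^4}\D_u^2 - \tfrac{\Delta_{S^2}}{r^2T^2}$. Substituting $H\Psi$ into the second slot of $\langle \cdot,\cdot\rangle$ produces two groups of terms: an $r^4 (A\psi_1)\overline{\gamma_2}$ term, and angular-plus-$\D_u$ terms coming from the $\psi_1 \leftrightarrow \psi_2$ swap in the first argument.

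The first group simplifies because the factor $r^4$ cancels with $-\tfrac{1}{r^4}\D_u^2$ in $A$, and likewise $r^4 \cdot \tfrac{1}{r^2T^2}$ produces exactly the weight $\tfrac{r^2}{T^2}$ that already appears in the angular part of the inner product. I would then integrate by parts in $u$ on $-(\D_u^2\psi_1)\overline{\gamma_2}$ to produce $(\D_u\psi_1)\overline{(\D_u\gamma_2)}$ (no boundary terms, since $\Psi,\Gamma$ are compactly supported in $u$), and integrate by parts on $S^2$ against the Laplacian
\[
\Delta_{S^2} = \D_{(\cos\theta)}\!\bigl(\sin^2\theta\, \D_{(\cos\theta)}\bigr) + \frac{1}{\sin^2\theta}\D_\phi^2
\]
with the measure $d(\cos\theta)d\phi$. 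For the $\D_{(\cos\theta)}$ piece the boundary terms at $\cos\theta = \pm 1$ vanish because $\sin^2\theta = 1-\cos^2\theta$ vanishes there, and for the $\D_\phi$ piece they vanish by $2\pi$-periodicity. The net result is to trade $r^4(A\psi_1)\overline{\gamma_2}$ for
\[
(\D_u\psi_1)\overline{(\D_u\gamma_2)} + \frac{r^2}{T^2}\!\left(\sin^2\theta\,(\D_{(\cos\theta)}\psi_1)\overline{(\D_{(\cos\theta)}\gamma_2)} + \frac{1}{\sin^2\theta}(\D_\phi\psi_1)\overline{(\D_\phi\gamma_2)}\right),
\]
and this precisely matches (up to which of $\psi_1,\gamma_2$ carries the conjugate bar, which is fine by conjugation symmetry of the weights) the expression obtained by the analogous manipulation on $\langle \Psi, H\Gamma\rangle$.

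Carrying out the same computation on $\langle \Psi,H\Gamma\rangle$, the $r^4\psi_2\overline{A\gamma_1}$ term reduces by the same two integrations by parts to $(\D_u\psi_2)\overline{(\D_u\gamma_1)}$ plus its analogous angular terms, while the remaining pieces of the inner product contribute the same $\psi_1 \leftrightarrow \gamma_2$ terms as above. Comparing the two expansions term by term then gives $\langle H\Psi,\Gamma\rangle = \langle \Psi,H\Gamma\rangle$. The main thing to be careful about is the angular integration by parts and bookkeeping of the weights $r^4$, $\tfrac{r^2}{T^2}$, $\sin^2\theta$, $\tfrac{1}{\sin^2\theta}$; there is no real analytical obstacle since all boundary contributions vanish by compact support in $u$, by the vanishing of $\sin^2\theta$ at the poles, and by $\phi$-periodicity, so the result reduces to a careful but routine calculation.
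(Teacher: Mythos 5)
Your proof is correct, but it takes a genuinely different route from the paper. You verify $\langle H\Psi,\Gamma\rangle=\langle\Psi,H\Gamma\rangle$ by direct computation: expand $H\Psi=(\psi_2,A\psi_1)^T$, cancel the weights $r^4$ and $\tfrac{r^2}{T^2}$ against the factors in $A$, and integrate by parts once in $u$ and once on $S^2$, with all boundary contributions vanishing by compact support in $u$, the vanishing of $\sin^2\theta$ at the poles, and $\phi$-periodicity. The bookkeeping checks out (note that in both expansions the conjugate always stays on the $\gamma$-entries, since the weights are real, so the two sides match literally term by term). The paper instead argues indirectly: it takes $\Psi$ to be a solution of the Cauchy problem $\eqref{cp2}$, uses conservation of the energy $E(\psi)=\langle\Psi,\Psi\rangle$ together with $i\D_t\Psi=H\Psi$ to get $\langle H\Psi,\Psi\rangle=\langle\Psi,H\Psi\rangle$ at $t=0$ for arbitrary data $\Psi_0\in C_0^{\infty}\left((-\infty,\uinf)\times S^2\right)^2$, and then polarizes. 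The paper's route recycles the already-established energy conservation and avoids redoing the calculus, but it leans on Theorem~\ref{existenceanduniqueness} (to realize every $C_0^\infty$ element as initial data of a global solution) and on a complex polarization step; your route is self-contained and more elementary, needing neither the existence theorem nor polarization, at the cost of carrying out explicitly the integration by parts that the paper buries inside the ``easy calculation'' proving $\tfrac{d}{dt}E(\psi)=0$. Both are complete proofs.
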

\begin{proof}Consider a solution $\Psi$ of $\eqref{cp2}$. Upon making the identification $\Psi = (\psi, i\psi_t)^T$, we know that $\psi$ solves $\eqref{cp}$. We have that $\langle \Psi,\Psi\rangle = E(\psi)$, and therefore that $\frac{d}{dt}\langle \Psi, \Psi\rangle = 0$ for solutions of $\eqref{cp2}$. On the other hand, 
\begin{align*}\frac{d}{dt}\langle \Psi, \Psi \rangle &= \langle \D_t \Psi, \Psi\rangle + \langle \Psi, \D_t\Psi\rangle \\
&=-i\langle H\Psi, \Psi\rangle + i\langle \Psi, H\Psi\rangle,\end{align*}
which shows that $\langle H\Psi, \Psi\rangle = \langle \Psi, H\Psi\rangle$ for any $\Psi$ solving $\eqref{cp2}$. Note that this expression holds for each $t$, and in particular, at $t = 0$. Thus, $\langle H\Psi_0, \Psi_0\rangle = \langle \Psi_0, H\Psi_0\rangle$. But the initial data $\Psi_0$ can be chosen arbitrarily in $C_0^{\infty}\left((-\infty,\uinf)\times S^2\right)^2$, which, after a simple polarization argument, shows that $H$ is symmetric on the space $C_0^{\infty}\left((-\infty,\uinf)\times S^2\right)^2$ with respect to the inner product $\langle \cdot, \cdot \rangle$.\end{proof}

We next observe that the only manifestation of the angular variables $(\theta,\phi)$ in the problem $\eqref{cp}$ is in the spherical Laplacian. Since any smooth function on $S^2$ can be expanded into an absolutely and uniformly convergent series in terms of spherical harmonics (c.f. \cite{courant1}), we may therefore write
\begin{equation}\Psi(t,u,\theta,\phi) = \sum_{l=0}^{\infty}\sum_{|m|\leq l} \Psi^{lm}(t,u)Y_{lm}(\theta,\phi),\label{spharmdecomp}\end{equation}
where the $Y_{lm}(\theta,\phi)$ are the spherical harmonics (i.e. $\Delta_{S^2} Y_{lm} = -l(l+1)Y_{lm}$) and this series converges uniformly and absolutely for each fixed $(t,u) \in R \times (-\infty,\uinf)$. Furthermore, we know that $\Psi^{lm} = (\psi_1^{lm}, \psi_2^{lm})^T$, where $\psi^{lm}_i = \langle \psi_i, Y_{lm}\rangle_{L^2(S^2)}$. It is clear therefore that $\Psi^{lm}(t,u)$ is smooth and for each $t$, $\Psi^{lm}(t,u) \in C_0^{\infty}\left(-\infty,\uinf\right)^2$.
Thus, for any $\Psi, \Gamma \in C_0^{\infty}\left((-\infty,\uinf)\times S^2\right)^2$, we can decompose the scalar product $\langle \Psi, \Gamma \rangle$ according to
\begin{equation*}\langle \Psi, \Gamma\rangle = \sum_{l=0}^{\infty}\sum_{|m|\leq l} \langle \Psi^{lm}, \Gamma^{lm}\rangle_l =\sum_{l=0}^{\infty}\sum_{|m|\leq l} \int_{-\infty}^{\uinf} r^4 \psi_2^{lm}\overline{\gamma_2^{lm}} + (\D_u\psi_1^{lm})\overline{(\D_u\gamma_1^{lm})} + \frac{r^2}{T^2}l(l+1)\psi_1^{lm}\overline{\gamma_1^{lm}}du,\end{equation*}
which follows from integrating by parts.

The action of the Hamiltonian also simplifies under the spherical harmonic decomposition: 
\begin{equation*}H\Psi(t,u,\theta,\phi) = \sum_{l=0}^{\infty}\sum_{|m|\leq l} H_l \Psi^{lm}(t,u)Y_{lm}(\theta,\phi),\end{equation*}
where
\begin{equation}H_l = \left(\begin{array}{cc} 0 & 1 \\ -\frac{1}{r^4}\D_u^2 + \frac{l(l+1)}{r^2 T^2} & 0\end{array}\right).\label{defofhamiltonian}\end{equation}
Therefore, the components $\Psi^{lm}$ in the spherical harmonic decomposition of $\Psi$ solve a reduced equation:
\begin{prop}Consider the solution $\Psi$ of $\eqref{cp2}$. The component functions $\Psi^{lm}$ in the spherical harmonic decomposition of $\Psi$ $\eqref{spharmdecomp}$ solve the reduced problem
\begin{equation}\begin{cases}i\D_t \Psi^{lm} = H_l \Psi^{lm} \text{ on } \R \times (-\infty,\uinf),\\ \Psi^{lm}(0,u) = \Psi_0^{lm} \in C_0^{\infty}(-\infty,\uinf)^2.\label{cp3}\end{cases}\end{equation}\end{prop}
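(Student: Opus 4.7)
The plan is to derive the reduced system by projecting the Hamiltonian equation $\eqref{cp2}$ against each spherical harmonic $Y_{lm}$ using the $L^2(S^2)$ inner product. Since $\Psi^{lm}(t,u) = \langle \Psi(t,u,\cdot,\cdot), Y_{lm}\rangle_{L^2(S^2)}$ by definition, and since Theorem \ref{existenceanduniqueness} supplies a smooth solution $\Psi$ that is compactly supported in $(u,\theta,\phi)$ for each $t$, every interchange of limits, derivatives, and integrals required below is legitimate.

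First, I would apply $\langle \cdot, Y_{lm}\rangle_{L^2(S^2)}$ to both sides of $i\D_t \Psi = H\Psi$. On the left, smoothness and compact support in $(\theta,\phi)$ allow differentiation under the integral sign, giving $i\D_t \Psi^{lm}(t,u)$. On the right, I would use the expansion of $H\Psi$ displayed just before the statement, $H\Psi = \sum_{l',m'} H_{l'}\Psi^{l'm'}(t,u)\,Y_{l'm'}(\theta,\phi)$, together with the orthonormality of the spherical harmonics, to obtain $\langle H\Psi, Y_{lm}\rangle_{L^2(S^2)} = H_l \Psi^{lm}(t,u)$. Equating the two sides yields the evolution equation $i\D_t \Psi^{lm} = H_l \Psi^{lm}$ on $\R \times (-\infty,\uinf)$.

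For the initial condition, I would evaluate the defining integral at $t=0$: since $\Psi(0,u,\theta,\phi) = \Phi_0(u,\theta,\phi)$, we get $\Psi^{lm}(0,u) = \langle \Phi_0(u,\cdot,\cdot), Y_{lm}\rangle_{L^2(S^2)}$. Smoothness in $u$ follows from differentiating under the integral sign, while compact support in $(-\infty,\uinf)$ follows because $\Phi_0$ is compactly supported in $(-\infty,\uinf)$ uniformly in $(\theta,\phi) \in S^2$; hence $\Psi_0^{lm} \in C_0^{\infty}(-\infty,\uinf)^2$.

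The only substantive point is the justification of the term-by-term application of $H_l$ to the spherical harmonic series — in particular, commuting the second-order $u$-derivative and the $r$-dependent coefficients of $H$ through the infinite sum. This is the technical heart of the argument, but it is standard: because $\Psi$ and all of its partial derivatives are continuous and compactly supported in $(u,\theta,\phi)$ for each fixed $t$, the spherical harmonic coefficients decay faster than any polynomial in $l$, so the series and all of its derivatives converge uniformly on compact sets. Thus the termwise manipulation is legitimate, and the reduced problem $\eqref{cp3}$ follows.
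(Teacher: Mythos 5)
Your proposal is correct and takes essentially the same route as the paper: the paper's one-line proof invokes the expansion $H\Psi = \sum_{l,m} H_l \Psi^{lm} Y_{lm}$ derived in the preceding discussion together with the uniqueness of the spherical harmonic decomposition, and projecting the equation against each $Y_{lm}$ as you do is simply the explicit mechanism behind that uniqueness. The additional care you take with termwise differentiation and with the initial condition just fills in details the paper leaves implicit.
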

\begin{proof}The proposition follows from the discussion above and the uniqueness of the spherical harmonic decomposition.\end{proof}
Our strategy therefore is to solve problem $\eqref{cp3}$ and then sum up according to $\eqref{spharmdecomp}$ to obtain a solution of $\eqref{cp2}$. We note as well that $H_l$ is symmetric with respect to the inner product $\langle \cdot, \cdot \rangle_l$ on the domain $C_0^{\infty}(-\infty,\uinf)^2$, since for $\Psi^{lm},\Gamma^{lm} \in C_0^{\infty}(-\infty,\uinf)^2$ we have
\begin{align*}\langle H_l \Psi^{lm}, \Gamma^{lm}\rangle_l &= \langle H(\Psi^{lm}Y_{lm}), \Gamma^{lm}Y_{lm}\rangle \\ &= \langle \Psi^{lm}Y_{lm}, H(\Gamma^{lm}Y_{lm})\rangle \\ &= \langle \Psi^{lm}, H_l \Gamma^{lm}\rangle_l.\end{align*}
This also implies that the energy $E_l(\Psi^{lm}) := \langle \Psi^{lm},\Psi^{lm}\rangle_l$ is conserved for smooth, compactly supported solutions of $\eqref{cp3}$, since we have
\begin{align*}\frac{d}{dt}\langle \Psi^{lm},\Psi^{lm}\rangle_l &= \langle \D_t \Psi^{lm},\Psi^{lm}\rangle_l + \langle \Psi^{lm}, \D_t \Psi^{lm}\rangle_l \\ &= -i\langle H_l\Psi^{lm},\Psi^{lm}\rangle_l + i\langle \Psi^{lm}, H\Psi^{lm}\rangle_l \\ &= 0,\end{align*}
by the symmetry of $H_l$.
\section{The Hamiltonian}Let us rewrite $H_l$ as
\begin{equation}H_l = \left(\begin{array}{cc} 0 & 1 \\ -\frac{1}{r^4}\D_u^2 + V_l(u) & 0\end{array}\right),\label{defofhamiltonian2}\end{equation}
where $V_l(u) = \frac{l(l+1)}{r^2 T^2}$. (Recall the arguments are $T = T(r)$ and $r = r(u)$.) We wish to construct a self-adjoint extension of $H_l$, and we therefore need to find a Hilbert space on which $H_l$ is densely defined. To that end, let us define $\mathscr{H}_{V_l,0}^1$ as the completion of $C_0^{\infty}(-\infty,\uinf)$ within the Hilbert space $\mathscr{H}^1_{V_l}(-\infty,\uinf) := \left\{\psi : \psi_u \in L^2(-\infty,\uinf) \text{ and } r^2V_l^{\frac{1}{2}}\psi \in L^2(-\infty, \uinf)\right\}$. Let us also define $\mathscr{H}_{r^2, 0}$ as the completion of $C_0^{\infty}(-\infty, \uinf)$ within the Hilbert space $\mathscr{H}_{r^2}(-\infty,\uinf):= \left\{\psi : r^2 \psi \in L^2(-\infty,\uinf)\right\}$. Finally, we define Hilbert space $\mathscr{H} := \mathscr{H}_{V_l,0}^1 \otimes \mathscr{H}_{r^2, 0}$ endowed with the inner product $\langle \cdot, \cdot \rangle_l$ to be the underlying Hilbert space for the Hamiltonian $H_l$.

We next construct a self-adjoint extension of $H_l$:
\begin{prop}The operator $H_l$ with domain $\mathscr{D}(H_l) = C_0^{\infty}(-\infty,\uinf)^2$ is essentially self-adjoint in the Hilbert space $\mathscr{H}$.\end{prop}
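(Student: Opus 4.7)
The strategy is to verify essential self-adjointness via the deficiency-index criterion: it suffices to show $\ker(H_l^* - i) = \{0\}$ in $\mathscr{H}$, since the case $H_l^* + i$ is handled identically by replacing $i$ with $-i$ throughout. So suppose $\Phi = (\phi_1, \phi_2)^T \in \mathscr{H}$ satisfies $H_l^* \Phi = i\Phi$, meaning $\langle H_l\Psi, \Phi\rangle_l = -i\langle \Psi, \Phi\rangle_l$ for every $\Psi \in C_0^\infty(-\infty, 0)^2$. My goal is to show $\Phi = 0$.

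First I would reduce to a classical scalar ODE. Testing the identity against $\Psi = (\psi, 0)$ and $\Psi = (0, \psi)$ for $\psi \in C_0^\infty$, and integrating by parts twice in each resulting quadratic form, one extracts the distributional relations $A\phi_1 = i\phi_2$ and $A\phi_2 + \phi_2 = 0$. Interior elliptic regularity applied to these second-order ODEs upgrades $\phi_1, \phi_2$ to classical smooth solutions on $(-\infty, 0)$. The decisive relation is the closed Sturm-Liouville equation for $\phi_2$:
\[
-\phi_2''(u) + r^4(u)\bigl(V_l(u) + 1\bigr)\phi_2(u) = 0.
\]

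Next I would analyze this ODE at its two singular endpoints and combine the asymptotics with the Hilbert-space membership of $\phi_2$ to force $\phi_2 \equiv 0$. Near $u \to -\infty$, one has $r \to r_0$ and $V_l \to 0$, so the equation reduces to a regular perturbation of $-\phi_2'' + r_0^4 \phi_2 = 0$ with fundamental pair $e^{\pm r_0^2 u}$; the growing branch is excluded because $\phi_2 \in \mathscr{H}_{r^2,0}$ together with the boundedness of $r$ at the horizon forces $\phi_2 \in L^2$ near $-\infty$. Near $u \to 0^-$, one has $r \sim -1/u$ and $r^4 \sim u^{-4}$, and a Liouville-Green analysis yields a fundamental pair with asymptotics $\sim e^{\pm 1/u}$; again the growing branch is excluded by $r^2 \phi_2 \in L^2$ near $0^-$. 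With both growing branches eliminated, $\phi_2$ and $\phi_2'$ decay at both ends, so pairing the ODE with $\bar\phi_2$ and integrating gives
\[
\int_{-\infty}^{0} |\phi_2'|^2 + r^4(V_l + 1)|\phi_2|^2 \, du = \bigl[\phi_2'\bar\phi_2\bigr]_{-\infty}^{0} = 0,
\]
and non-negativity forces $\phi_2 \equiv 0$. Consequently $A\phi_1 = i\phi_2 = 0$; an analogous energy argument for $\phi_1$ solving $-\phi_1'' + r^4 V_l\phi_1 = 0$ in $\mathscr{H}^1_{V_l, 0}$ then gives $\phi_1 \equiv 0$ (the $l = 0$ case reduces to $\phi_1'' = 0$, and the only affine function in the completion of $C_0^\infty(-\infty, 0)$ under the $L^2$-derivative norm is zero).

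The main obstacle I foresee is the Liouville-Green analysis near $u = 0^-$, which is a strongly irregular singular point ($r^4 \sim u^{-4}$): the asymptotic form $e^{\pm 1/u}$ of the fundamental pair must be justified rigorously using the far-field assumptions \eqref{farfieldasymptotics1}--\eqref{farfieldasymptotics2}, which control the subleading corrections to $r(u)$ and $V_l(u)$. The horizon endpoint is comparatively tame because the leading-order equation is constant-coefficient. Once both endpoint asymptotics are established, the energy identity and the conclusion $\Phi = 0$ are routine.
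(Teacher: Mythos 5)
Your proposal is correct in outline, but it takes a genuinely different route from the paper. The paper never computes deficiency indices: it invokes Stone's theorem. Concretely, the global existence, uniqueness, compact support, and energy conservation already established for the reduced Cauchy problem \eqref{cp3} give solution operators $U(t)$ that form a strongly continuous one-parameter unitary group on $\mathscr{H}$, leave the dense domain $C_0^{\infty}(-\infty,0)^2$ invariant, and are strongly differentiable there with $i^{-1}$ times the derivative equal to $-H_l$; essential self-adjointness then follows at once. That argument buys brevity and avoids all ODE asymptotics, at the price of leaning on the full symmetric-hyperbolic well-posedness theory (including finite propagation speed, needed so that $U(t)$ preserves compact supports). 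Your deficiency-index computation is self-contained and more quantitative: it correctly identifies the deficiency equation as the eigenvalue equation \eqref{evalueequation} with $\omega=\pm i$, and the ``main obstacle'' you flag at $u\to 0^-$ is in fact resolved by machinery the paper builds later for the resolvent --- in the variables $s=s(u)$, $\eta=r\phi_2$, your closed equation for $\phi_2$ becomes $-\eta''+\eta+W_l\eta=0$, i.e.\ \eqref{evalueequation2} with $\omega=i$, whose Jost solutions with asymptotics $e^{\mp s}$ as $s\to\pm\infty$ are constructed there ($W_l$ is short-range at $s=+\infty$ and exponentially small at $s=-\infty$); moreover $\|r^2\phi_2\|_{L^2(du)}=\|\eta\|_{L^2(ds)}$, which makes the exclusion of the growing branches immediate. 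Two details to tighten: perform the final energy identity in the $u$ variable, as you do, since $r^4(V_l+1)\geq r_0^4>0$ there whereas $1+W_l$ need not be nonnegative; and for $l=0$ justify that the completion $\mathscr{H}^1_{V_0,0}$ contains no nonzero constant via the bound $|\psi(u)|\leq |u|^{1/2}\|\psi'\|_{L^2}$ valid for $\psi\in C_0^{\infty}(-\infty,0)$.
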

\begin{proof}To prove this, we use the following version of Stone's theorem (c.f. \cite{reedsimon1}, Sec. VIII.4):
\begin{thm}[{\bf Stone's Theorem}]Let $U(t)$ be a strongly continuous one-parameter unitary group on a Hilbert space $\mathscr{H}$. Then there is a self-adjoint operator $A$ on $\mathscr{H}$ so that $U(t) = e^{itA}$. Furthermore, let $\mathscr{D}$ be a dense domain which is invariant under $U(t)$ and on which $U(t)$ is strongly differentiable. Then $i^{-1}$ times the strong derivative of $U(t)$ is essentially self-adjoint on $\mathscr{D}$ and its closure is $A$.\label{st}\end{thm}

Now consider the Cauchy problem $\eqref{cp3}$. By the theory of symmetric hyperbolic systems, the problem $\eqref{cp3}$ has a unique, smooth, global solution $\Psi^{lm}$ that is compactly supported for all times $t$ (we prove this similarly to Theorem~\ref{existenceanduniqueness}). Thus, for $t\in \R$ we define the operators 
\[U(t): C_0^{\infty}(-\infty,\uinf)^2\mapsto C_0^{\infty}(-\infty,\uinf)^2 \text{ by }\]
\[U(t)\Psi^{lm}_0 = \Psi^{lm}(t) = \left(\psi^{lm}(t),i\D_t\psi^{lm}(t)\right)^T.\]
Note that $U(t)$ leaves the dense subspace $C_0^{\infty}(-\infty,\uinf)^2$ invariant for all times $t$ and also, by the energy conservation, the $U(t)$ are unitary with respect to the energy inner product and therefore extend to unitary operators on $\mathscr{H}$. The uniqueness of $\Psi^{lm}$ guarantees that $U(0) = I$ and $U(t)U(s) = U(t+s)$ for all $s,t \in \R$. Thus, the $U(t)$ form a one-parameter unitary group. The fact that the solutions are smooth in $t$ and $u$ guarantees that this group is strongly continuous on $\mathscr{H}$ and strongly differentiable on $C_0^{\infty}(-\infty,\uinf)^2$. Then, for $\gamma_1,\gamma_2 \in C_0^{\infty}(-\infty,\uinf)$, 
\[i^{-1}\lim_{h\to0} \frac{1}{h}\left(U(h)(\gamma_1,\gamma_2)^T - (\gamma_1,\gamma_2)^T\right) = -H_l (\gamma_1,\gamma_2)^T.\]
Thus, by Stone's theorem, $H_l$ is essentially self-adjoint on $\mathscr{H}$ with self-adjoint closure $\bar{H}_l$ and $U(t) = e^{-it\bar{H}_l}$.\end{proof}

To obtain a representation of the solution $\Psi^{lm}$ of $\eqref{cp3}$, we will use Stone's formula which relates the spectral projections of a self-adjoint operator to the resolvent. We recall Stone's formula in the following theorem:
\begin{thm}[{\bf Stone's Formula}]For a self-adjoint operator $A$, the following holds
\begin{equation}\frac{1}{2}\left(P_{[a,b]} + P_{(a,b)}\right) = \lim_{\eps \searrow 0} \frac{1}{2\pi i}\int_a^b \left[ (A - \lambda -i\eps)^{-1} - (A - \lambda + i\eps)^{-1}\right]d\lambda,\label{stonesformula}\end{equation}
where the limit is taken in the strong operator topology.\label{stoneformula}\end{thm}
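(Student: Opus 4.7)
The plan is to derive Stone's formula from the spectral theorem by recognizing the resolvent difference as a Poisson kernel against the spectral measure, and then evaluating the resulting integral with Fubini and dominated convergence. Let $E$ denote the projection-valued spectral measure of the self-adjoint operator $A$. By the Borel functional calculus, for every $\eps > 0$ and $\lambda \in \R$,
\begin{equation*}
(A - \lambda - i\eps)^{-1} - (A - \lambda + i\eps)^{-1} = \int_{\R} \frac{2i\eps}{(\mu - \lambda)^2 + \eps^2}\, dE(\mu).
\end{equation*}
Dividing by $2\pi i$ and integrating in $\lambda$ from $a$ to $b$, the right-hand side of \eqref{stonesformula} is, formally,
\begin{equation*}
F(\eps) := \int_{\R} f_\eps(\mu)\, dE(\mu), \qquad f_\eps(\mu) := \frac{1}{\pi}\int_a^b \frac{\eps}{(\mu - \lambda)^2 + \eps^2}\, d\lambda,
\end{equation*}
and the first step is to justify the interchange of the $\lambda$-integral and the spectral integral.

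To justify it, I would pair with vectors $\phi, \psi \in \mathscr{H}$, reducing the question to an interchange of scalar integrals against the finite complex Borel measure $d\langle \psi, E(\cdot)\phi\rangle$; since the Poisson kernel is nonnegative, continuous, and bounded on $[a,b]\times \R$ for each fixed $\eps > 0$, Fubini's theorem applies. A direct computation then evaluates the inner integral as
\begin{equation*}
f_\eps(\mu) = \frac{1}{\pi}\left[\arctan\!\left(\frac{b-\mu}{\eps}\right) - \arctan\!\left(\frac{a-\mu}{\eps}\right)\right],
\end{equation*}
from which one reads off the pointwise limit $f_\eps(\mu) \to f_0(\mu) := \tfrac{1}{2}\bigl(\chi_{[a,b]}(\mu) + \chi_{(a,b)}(\mu)\bigr)$, equal to $1$ on $(a,b)$, to $1/2$ at the endpoints, and to $0$ outside $[a,b]$; this is precisely the integrand that produces $\tfrac{1}{2}(P_{[a,b]} + P_{(a,b)})$ under the functional calculus.

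It remains to pass to the limit $\eps \searrow 0$ inside the spectral integral in the strong operator topology. The uniform bound $|f_\eps(\mu)| \leq 1$, together with the pointwise convergence $f_\eps \to f_0$ everywhere on $\R$, implies by the dominated convergence theorem applied to the finite positive measure $d\langle \phi, E(\cdot)\phi\rangle$ that
\begin{equation*}
\|F(\eps)\phi - F(0)\phi\|^2 = \int_{\R} |f_\eps(\mu) - f_0(\mu)|^2 \, d\langle \phi, E(\cdot)\phi\rangle(\mu) \longrightarrow 0
\end{equation*}
for each $\phi \in \mathscr{H}$, where $F(0) := \tfrac{1}{2}(P_{[a,b]} + P_{(a,b)})$. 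The main obstacle is really nothing deeper than the bookkeeping involved in reducing operator-valued statements to scalar spectral integrals; once one is comfortable with the functional calculus through its matrix elements, what remains is the familiar observation that the Poisson integral of $\chi_{[a,b]}$ tends to $1$ on the open interval, $1/2$ at the boundary, and $0$ outside.
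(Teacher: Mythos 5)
Your proof is correct and is essentially the standard argument: the paper itself does not prove this theorem but defers to Reed--Simon (Chapter VII), and your derivation via the spectral theorem, the Poisson-kernel identity for the resolvent difference, the explicit $\arctan$ evaluation, and dominated convergence against the finite measures $d\langle \phi, E(\cdot)\phi\rangle$ is precisely the proof given there. All the steps, including the Fubini interchange justified by pairing with vectors and the uniform bound $|f_\eps|\leq 1$, are sound.
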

We refer to \cite{reedsimon1}, chapter VII for a proof. According to Stone's formula, to understand the spectral projections of $\ham$, we must investigate the resolvent operator $(\ham - \omega)^{-1}: \mathscr{H} \mapsto \mathscr{H}$. Since $\ham$ is self-adjoint, it follows immediately that $(\ham - \omega)^{-1}$ exists for each $\omega \in \C\setminus \R$. So let us fix $\omega \in \C \setminus\R$ and consider the eigenvalue equation 
\begin{equation}\ham \Phi = \omega \Phi.\label{eigenvalueequation}\end{equation}
Note that since $\omega \not\in \sigma(\ham)$, this equation does not have solutions in $\mathscr{H}$. Nonetheless, we will be able to construct the resolvent out of special solutions of this equation. To that end, let us observe that $\eqref{eigenvalueequation}$ is equivalent to the differential equation 
\begin{equation}-\zeta''(u) - \omega^2r^4\zeta + \frac{r^2}{T^2}l(l+1)\zeta = 0\label{evalueequation}\end{equation}
on the interval $(-\infty,\uinf)$ where $\zeta = \phi_1 \text{ or } \phi_2$. This ODE is difficult to solve explicitly, so let us use the coordinate $s= s(u)$ defined in $\eqref{scoord}$ and define 
\begin{equation}\eta(s) = r(u(s))\zeta(u(s)).\label{etadef}\end{equation}
Inserting these into $\eqref{evalueequation}$, we obtain the equivalent ODE 
\begin{equation}-\eta''(s) - \omega^2\eta(s) + \left(\frac{l(l+1)}{r^2T^2} - \frac{1}{rT^2K^2}\left(\frac{T'}{T} + \frac{K'}{K}\right)\right)\eta(s)=0.\label{evalueequation2}\end{equation}
To investigate this ODE, we need to look at the potential 
\begin{equation}W_l(s):= \left(\frac{l(l+1)}{r^2T^2} - \frac{1}{rT^2K^2}\left(\frac{T'}{T} + \frac{K'}{K}\right)\right).\label{potential}\end{equation}
Invoking the asymptotics $\eqref{farfieldasymptotics1}$ and $\eqref{farfieldasymptotics2}$, we see that as $s$ tends to infinity, $|W_l(s)| = O\left(\frac{l(l+1)}{s^2}\right)$ for $l\neq 0$ and $|W_0(s)| = O\left(\frac{1}{s^3}\right)$. We note that for large $r$, we have $s = r + O(\log r)$, and for small $r$ we have $s = c_1c_2log(r - r_0) + O(1)$, using the asymptotics $\eqref{eventhorizonasymptotics1}$, $\eqref{eventhorizonasymptotics2}$ as well as $\eqref{s(r)}$. Noting also that $W_l(s) = O(r - r_0)$ near the horizon, the previous comments yield that $|W_l(s)| \leq \alpha_1 e^{\alpha_2 s}$ as $s \to -\infty$ for some constants $\alpha_1,\alpha_2>0$. 

We now return to equation $\eqref{evalueequation2}$ and prescribe asymptotic boundary conditions to determine a pair of fundamental solutions. In the case $\Im(\omega) >0$, we require
 \begin{equation}\lim_{s\to -\infty}e^{i\omega s}\eta_{1,\omega}(s) = 1, \text{ and } \lim_{s\to-\infty}\left(e^{i\omega s} \eta_{1,\omega}(s)\right)' = 0\label{leftposbcs}\end{equation} 
and
\begin{equation}\lim_{s\to \infty}e^{-i\omega s}\eta_{2,\omega}(s) = 1, \text{ and } \lim_{s\to\infty}\left(e^{-i\omega s} \eta_{2,\omega}(s)\right)' = 0,\label{rightposbcs}\end{equation}
whereas in the case $\Im(\omega)<0$, we require
 \begin{equation}\lim_{s\to -\infty}e^{-i\omega s}\eta_{1,\omega}(s) = 1, \text{ and } \lim_{s\to-\infty}\left(e^{-i\omega s} \eta_{1,\omega}(s)\right)' = 0\label{leftnegbcs}\end{equation} 
and
\begin{equation}\lim_{s\to \infty}e^{i\omega s}\eta_{2,\omega}(s) = 1, \text{ and } \lim_{s\to\infty}\left(e^{i\omega s} \eta_{2,\omega}(s)\right)' = 0.\label{rightnegbcs}\end{equation}
Now these two solutions $\eta_{1,\omega}, \eta_{2,\omega}$ must be linearly independent, for if they were not, there would be a nonzero vector in the kernel of $(\ham - \omega)^{-1}$. But since $\ham$ is essentially self-adjoint, the spectrum is contained on the real line, and thus, for $\omega \in \C\setminus\R$, the kernel of $(\ham - \omega)^{-1}$ is trivial. Thus, $\etaa$ and $\etab$ form a fundamental set of solutions of $\eqref{evalueequation2}$, and the Wronskian $w(\etaa,\etab) := \etaa(s)\etab'(s) - \etaa'(s)\etab(s)$ is non-vanishing. We note also that an easy calculation shows that $w(\etaa,\etab)$ is independent of $s$.

Let us now construct the solutions $\etaa,\etab$ (for notational purposes, in this section we will write $\etaa = \eta^1(\lambda,\omega,s), \etab = \eta^2(\lambda,\omega,s)$, where $\lambda = l+\frac{1}{2}$; the $\lambda$ dependence in what follows can be important in a more general setting, so to make this as general as possible, we make explicit the $\lambda$ dependence). We cite \cite{potentialscattering} for the basic idea of this construction. We focus first on the solution with boundary conditions at $s = \infty$, and we restrict ourselves for the moment to $\Im \omega \leq 0$, $w\neq 0$.
We first write the ODE $\eqref{evalueequation2}$ as 
\begin{equation*}\eta''(s) + \left(\omega^2 - \frac{\lambda^2 - \frac{1}{4}}{s^2}\right)\eta(s) = \left(\left(\lambda^2 - \frac{1}{4}\right)\left[\frac{1}{r^2T^2} - \frac{1}{s^2} \right] - \frac{1}{rT^2K^2}\left(\frac{T'}{T} + \frac{K'}{K}\right)\right)\eta(s),\end{equation*}
and we find the Green's function for the operator on the left-hand side with zero boundary conditions at $ s=\infty$ is given by
\begin{equation}B(\lambda,\omega,s,y) = H(y-s)\frac{i}{2\omega}\left(\eta^2_0(\lambda,\omega,y)\eta^2_0(\lambda,-\omega,s) - \eta^2_0(\lambda,\omega,s)\eta^2_0(\lambda,-\omega,y)\right),\label{defofB}\end{equation}
where
\begin{equation}\eta^2_0(\lambda,\omega,s) = \left(\frac{1}{2}\pi \omega s\right)^{\frac{1}{2}}e^{-\frac{i\pi}{2}\left(\lambda + \frac{1}{2}\right)}H_{\lambda}^{(2)}(\omega s),\label{eta20def}\end{equation}
$H$ is the usual Heaviside function, and $H_{\lambda}^{(2)}$ is the Hankel function of the second kind (we reference \cite{specialfunctions} and  \cite{watsonbesselfunctions} for information about the Hankel functions). Note that $\lim_{s\to \infty}\eta^2_0(\lambda,\omega,s)e^{i\omega s} = 1$. Thus, if we require 
\begin{equation}\lim_{s \to \infty}\eta^2(\lambda,\omega,s)e^{i\omega s} = 1,\label{eta2bc}\end{equation}
then the equivalent integral equation for $\eta^2$ is
\begin{equation}\eta^2(\lambda,\omega,s) = \eta^2_0(\lambda,\omega,s) + \int_s^{\infty}B(\lambda,\omega,s,y)W(y)\eta^2(\lambda,\omega,y)dy.\label{eta2intequation}\end{equation}

We wish to solve this as a perturbation series, so we write 
\begin{equation}\eta^2(\lambda,\omega,s) = \sum_{n=0}^{\infty}\eta^2_{n}(\lambda,\omega,s),\label{eta2series}\end{equation}
where
\begin{equation}\eta^2_{n+1} = \int_s^{\infty}B(\lambda,\omega,s,y)W(y)\eta^2_n(\lambda,\omega,y)dy.\label{eta2ndef}\end{equation}
To address convergence, we note that it is shown in appendix A of \cite{potentialscattering} that for $0<s<y$ we have
\begin{equation}|\eta^2_0(\lambda,\omega,s)| \leq C \left(\frac{|\omega|s}{1 + |\omega|s}\right)^{-\lambda + \frac{1}{2}}e^{(\Im \omega)s}\label{eta20bound}\end{equation}
and
\begin{equation}|B(\lambda,\omega,s,y)| \leq C e^{|\Im \omega| y + (\Im \omega)s}\left(\frac{y}{1 + |\omega| y}\right)^{\lambda + \frac{1}{2}}\left(\frac{s}{1 + |\omega|s}\right)^{-\lambda + \frac{1}{2}}\label{Bbound}\end{equation}
where $C$ depends on $\lambda$. It is easy to show then by induction that
\begin{equation}|\eta_n^2(\lambda,\omega,s)| \leq C \frac{(CQ(s))^n}{n!}\left(\frac{|\omega|s}{1 + |\omega|s}\right)^{-\lambda + \frac{1}{2}}e^{(\Im \omega)s},\label{eta2nbound}\end{equation}
where
\begin{equation}Q(s) = \int_s^{\infty}\frac{y|W(y)|}{1 + |\omega| y}e^{(|\Im \omega| + \Im \omega )y}dy.\label{defofQ}\end{equation}
Note that for $\Im \omega \leq 0$, $Q$ is finite for all $s \in [0,\infty)$ and indeed $\|Q\|_{L^1([0,\infty))} < \infty$, owing to the integrability of $W$ and our requirement that $\Im \omega \leq 0$. Thus $\eta^2$ exists (for $\Im \omega \leq 0$ and $\omega \neq 0$), and the following bounds are obvious
\begin{equation}|\eta^2(\lambda,\omega,s)| \leq Ce^{(\Im \omega)s}\left(\frac{|\omega|s}{1 + |\omega|s}\right)^{-\lambda + \frac{1}{2}}e^{CQ(s)},\label{eta2bound}\end{equation}
and
\begin{equation}|\eta^2(\lambda,\omega,s) - \eta^2_0(\lambda,\omega,s)| \leq Ce^{(\Im \omega)s}\left(\frac{|\omega|s}{1 + |\omega|s}\right)^{-\lambda + \frac{1}{2}}(e^{CQ(s)}-1).\label{eta2minuseta20bound}\end{equation}
It is straightforward to show that $\eta^2$ is smooth in $s$ for fixed $\omega$, analytic in $\omega$ for fixed $s$ (for $\Im \omega < 0$), unique, and that $\eta^2$ solves the ODE $\eqref{evalueequation2}$. Furthermore, we easily obtain the following estimates:
\begin{equation}\left|\frac{d}{ds}\eta^2_0(\lambda,\omega,s)\right| \leq C|\omega| e^{(\Im \omega)s}\left(\frac{|\omega|s}{1 + |\omega|s}\right)^{-\lambda - \frac{1}{2}}\label{deta20bound}\end{equation}
and
\begin{equation}\left|\frac{d}{ds}\eta^2(\lambda,\omega,s) - \frac{d}{ds}\eta^2_0(\lambda,\omega,s)\right| \leq C\left(\frac{|\omega|s}{1 + |\omega|s}\right)^{-\lambda - \frac{1}{2}}e^{(\Im \omega)s}\int_s^{\infty}\left(\frac{|\omega|y}{1 + |\omega|y}\right)^{-\lambda + \frac{1}{2}}e^{CQ(y)}|W(y)|dy.\label{deta2minusdeta20bound}\end{equation}

From $\eqref{eta2bound}$ we see a possible singularity in $\eta^2$ at $\omega = 0$, but this singularity is removable. Indeed, repeating the above construction with the initial function $\eta^{2,0}_0(\lambda,\omega,s) = \omega^{\lambda - \frac{1}{2}}\left(\frac{1}{2}\pi \omega s\right)^{\frac{1}{2}}e^{-\frac{i\pi}{2}\left(\lambda + \frac{1}{2}\right)}H_{\lambda}^{(2)}(\omega s)$ yields a solution $\eta^{2,0}$ of the integral equation
\begin{equation*}\eta^{2,0}(\lambda,\omega,s) = \eta^{2,0}_0(\lambda,\omega,s) + \int_s^{\infty}B(\lambda,\omega,s,y)W(y)\eta^{2,0}(\lambda,\omega,y)dy.\label{eta2intequation}\end{equation*}
This solution satisfies the boundary conditions $\lim_{s\to \infty}\eta^{2,0}(\lambda,\omega,s)e^{i\omega s} = \omega^{\lambda - \frac{1}{2}}$ and it is continuous in the region $\Im \omega \leq 0$. Finally, $\eta^{2,0}$ can also be obtained from $\eta^2$ in the sense that $\omega^{\lambda - \frac{1}{2}} \eta^2 = \eta^{2,0}$ (by uniqueness).

So we have solved the ODE $\eqref{evalueequation2}$ with boundary conditions at $s = \infty$ for $\Im \omega \leq 0$. For $\Im \omega >0$, we obtain a solution of $\eta^2(\lambda,\omega,s)$ of this BVP by defining $\eta^2(\lambda,\omega,s) = \overline{\eta^2(\lambda, \bar{\omega},s)}$. The uniqueness guarantees that this is indeed a solution and it is easy to check that this function is well-behaved as $\omega \to 0$.

A similar construction produces a solution $\eta^1(\lambda,\omega,s)$ of $\eqref{evalueequation2}$ with boundary conditions at $s = -\infty$ which is also smooth in $s$, analytic in $\omega$, and unique. We can, furthermore, extend this solution to $\omega  = 0$ as above.

We can now use these solutions to construct the resolvent. For ease of notation, we will let $\etaa(s) = \eta^1(\lambda,\omega,s)$ and $\etab(s) = \eta^2(\lambda,\omega,s)$, since we are considering $l$ and therefore $\lambda$ to be fixed. Then, we use the the definition $\eqref{etadef}$ to obtain from $\etaa,\etab$ two solutions $\zetaa,\zetab$ of $\eqref{evalueequation}$, and in the case $\omega=0$, we again use $\eqref{etadef}$ to obtain solutions $\zeta_{1,0},\zeta_{2,0}$. It's easy to see that $w(\etaa,\etab) = w(\zetaa,\zetab)$, and therefore, it follows that for $\Im \omega \neq 0$, $\zetaa,\zetab$ form a pair of fundamental solutions for the ODE $\eqref{evalueequation}$ with non-vanishing Wronskian. Thus we may define the following function
\begin{equation}h_{\omega}(u,v):= -\frac{1}{w(\zetaa,\zetab)}\begin{cases}\zetaa(u)\zetab(v), & u\leq v \\ \zetaa(v)\zetab(u), & u > v.\end{cases}\label{defofh}\end{equation}
An easy calculation shows that $h_{\omega}(u,v)$ satisfies the distributional equations
\begin{equation}\left(-d_u^2 - r^4\omega^2 + \frac{r^2}{T^2}l(l+1)\right)h_{\omega}(u,v) = \delta(u-v) = \left(-d_v^2 - r^4\omega^2 + \frac{r^2}{T^2}l(l+1)\right)h_{\omega}(u,v)\label{disteqn}\end{equation}
where the arguments on the left are $r = r(v)$ and $r = r(u)$ on the right. We next use the function $h_{\omega}(u,v)$ to construct the resolvent $(\ham - \omega)^{-1}$. We note here that this argument is similar to an argument in \cite{thecauchyproblemforthewaveequationintheschwarzschildgeometry}.
\begin{prop}For any $\omega \in \C\setminus \R$, the resolvent $(\ham - \omega)^{-1}$ can be represented as an integral operator with kernel 
\begin{equation}k_{\omega}(u,v) = \delta(u-v)\left(\begin{array}{cc} 0 & 0 \\ 1 & 0\end{array}\right) + r^4(v)h_{\omega}(u,v)\left(\begin{array}{cc} \omega & 1\\ \omega^2 & \omega\end{array}\right).\label{resolventkernel}\end{equation}\end{prop}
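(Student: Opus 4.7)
The plan is to verify directly that, for each $\Psi = (\psi_1,\psi_2)^T \in C_0^\infty(-\infty,0)^2$, the function $\Phi(u) := \int_{-\infty}^0 k_\omega(u,v)\Psi(v)\,dv$ satisfies $(H_l - \omega)\Phi = \Psi$ in the classical sense, and then to upgrade this to the identification $(\bar H_l-\omega)^{-1} = $ (integral operator with kernel $k_\omega$) by invoking essential self-adjointness and density of $C_0^\infty$ in $\mathscr{H}$.

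First I would perform the matrix arithmetic in the definition of $k_\omega$: writing $f(u) := \int_{-\infty}^0 r^4(v)\,h_\omega(u,v)(\omega\psi_1(v)+\psi_2(v))\,dv$, one finds
\begin{equation*}
\Phi(u) = \begin{pmatrix} f(u) \\ \psi_1(u) + \omega f(u) \end{pmatrix}.
\end{equation*}
The first component of $(H_l-\omega)\Phi$ collapses trivially to $\psi_1$, while the second component equals $\bigl(-\tfrac{1}{r^4}\partial_u^2 + V_l - \omega^2\bigr)f - \omega\psi_1$. Dividing $\eqref{disteqn}$ by $r^4(u)$ yields the distributional identity $\bigl(-\tfrac{1}{r^4}\partial_u^2 + V_l - \omega^2\bigr)h_\omega(u,v) = r^{-4}(u)\delta(u-v)$, so applying this operator under the integral gives $\omega\psi_1(u)+\psi_2(u)$, and the second component of $(H_l-\omega)\Phi$ reduces to $\psi_2$. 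Hence $(H_l-\omega)\Phi = \Psi$ pointwise.

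Next I would check that $\Phi \in \mathscr{H}$ and, more importantly, that $\Phi \in \mathscr{D}(\bar H_l)$. Since $\Psi$ has compact support in $(-\infty,0)$, the integral defining $f$ reduces to an integral over the support of $\Psi$, and the explicit formula $\eqref{defofh}$ expresses $f$ as a linear combination of $\zeta_{1,\omega}$ on the right of $\mathrm{supp}\,\Psi$ and of $\zeta_{2,\omega}$ on the left. Translating back to the $s$-variable via $\eqref{etadef}$, the bounds $\eqref{eta2bound}$ and $\eqref{deta2minusdeta20bound}$ (together with their analogues for $\eta^1$) give exponential decay of $\etaa$ as $s\to -\infty$ and of $\etab$ as $s\to \infty$ for $\omega \in \C\setminus\R$, from which the weighted $L^2$ conditions defining $\mathscr{H}_{V_l,0}^1 \otimes \mathscr{H}_{r^2,0}$ follow. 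Smoothness of $\Phi$ away from $\mathrm{supp}\,\Psi$ and the already-verified identity $(H_l-\omega)\Phi = \Psi \in \mathscr{H}$ then let me approximate $\Phi$ in the graph norm of $H_l$ by a sequence in $C_0^\infty(-\infty,0)^2$, placing $\Phi$ in the domain of the closure $\bar H_l$.

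Finally, since $\bar H_l$ is self-adjoint and $\omega\in\C\setminus\R$, the operator $\bar H_l - \omega$ is a bijection of $\mathscr{D}(\bar H_l)$ onto $\mathscr{H}$. The previous step shows that on the dense subspace $C_0^\infty(-\infty,0)^2 \subset \mathscr{H}$, the integral operator with kernel $k_\omega$ supplies a preimage under $\bar H_l - \omega$; by uniqueness of the resolvent this integral operator must coincide with $(\bar H_l - \omega)^{-1}$ on that subspace, and boundedness of both sides permits extension by continuity to all of $\mathscr{H}$. I expect the main obstacle to be the endpoint analysis at $u\to 0^-$, where the coordinate $s$ blows up only logarithmically and one must confirm using the small-$s$ asymptotics of the Hankel functions and the exponential bound $|W_l|\le \alpha_1 e^{\alpha_2 s}$ near the horizon that $\etaa$ and its derivative remain integrable against the weights $r^2$ and $r^2 V_l^{1/2}$; the opposite endpoint $u\to 0^-$ (i.e.\ $s\to\infty$) is comparatively routine because the potential $W_l$ is integrable there.
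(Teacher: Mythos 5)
Your proof is correct in substance but runs in the opposite direction from the paper's. You verify the right-inverse identity $(\ham-\omega)K_\omega=I$ on $C_0^{\infty}(-\infty,\uinf)^2$, which forces you to show that $K_\omega\Psi$ actually lies in $\mathscr{D}(\ham)$ --- hence the decay estimates on $\zetaa,\zetab$ and the graph-norm cutoff argument, which is the genuinely technical part of your route and which you only sketch. The paper instead verifies the left-inverse identity $K_\omega(\ham-\omega)\Gamma=\Gamma$ for $\Gamma\in C_0^{\infty}(-\infty,\uinf)^2$: essentially the same computation with \eqref{disteqn}, but carried out on the domain $\mathscr{D}(K_\omega)=(\ham-\omega)\,C_0^{\infty}(-\infty,\uinf)^2$, which is dense because $\ham-\omega$ is surjective and $\ham$ is the closure of $H_l$. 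That identifies $K_\omega$ with the bounded operator $(\ham-\omega)^{-1}$ on a dense set and extends by continuity, with no need to discuss the domain of the closure or the endpoint behavior of $h_\omega$ at all. Your approach buys an explicit preimage for every $\Psi\in C_0^{\infty}(-\infty,\uinf)^2$ at the cost of the domain verification; the paper's buys economy. Two labeling slips you should fix, though they do not affect the argument: by \eqref{defofh}, $f$ is proportional to $\zetab$ to the \emph{right} of $\mathrm{supp}\,\Psi$ (where $u>v$, i.e.\ $s\to+\infty$) and to $\zetaa$ to the left, the reverse of what you wrote; and the endpoint where $s$ grows only logarithmically in $r-r_0$ and where $|W_l|\le\alpha_1 e^{\alpha_2 s}$ is the horizon $u\to-\infty$, not $u\to 0^-$.
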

\begin{proof}Consider the integral operator $K_{\omega}$ with kernel given by $k_{\omega}(u,v)$ acting on the domain $\mathscr{D}(K_{\omega}):= \left\{ (\ham - \omega)\Psi : \Psi \in C_0^{\infty}(-\infty,\uinf)^2\right\}$. We claim first that $\mathscr{D}(K_{\omega})$ is a dense subset of $\mathscr{H}$. To prove this, let $\xi \in \mathscr{H}$ be arbitrary. Because the resolvent exists, $(\ham - \omega): \mathscr{D}(\ham) \mapsto \mathscr{H}$ is onto. Thus, there exists $\gamma \in \mathscr{D}(\ham)$ so that $(\ham - \omega)\gamma = \xi$. Since $\ham$ is the closure of $H_l$, there is a sequence $\{\gamma_n\} \subset C_0^{\infty}(-\infty,\uinf)$ so that $\gamma_n \to \gamma$ and $\ham \gamma_n \to \ham \gamma$ as $n\to \infty$. Thus, $\{(\ham - \omega)\gamma_n\}$ converges to $(\ham - \omega)\gamma = \xi$, and thus, $\mathscr{D}(K_{\omega})$ is dense in $\mathscr{H}$. 

Now, for an arbitrary $\Gamma = (\gamma_1,\gamma_2)^T \in C_0^{\infty}(-\infty,\uinf)^2$, we have 
\begin{align*}(S_{\omega}(\ham - &\omega) \Gamma)(u) := \int_{-\infty}^{\uinf}k_{\omega}(u,v)(\ham - \omega)\Gamma(v)dv\\
&=\int_{-\infty}^{\uinf}\left[ \delta(u-v)\left(\begin{array}{cc} 0 & 0 \\ 1 & 0\end{array}\right) + r^4(v)h_{\omega}(u,v)\left(\begin{array}{cc} \omega & 1\\ \omega^2 & \omega\end{array}\right)\right]\left(\begin{array}{c}-\omega \gamma_1 + \gamma_2\\ \left(\frac{1}{r^4}d_u^2 + \frac{l(l+1)}{r^2T^2}\right)\gamma_1 - \omega\gamma_2\end{array}\right)dv \\
&=(0, -\omega\gamma_1 + \gamma_2)^T + \int_{-\infty}^{\uinf}h_{\omega}(u,v)\left(\begin{array}{c}\left(-d_u^2 - r^4\omega^2 + \frac{r^2}{T^2}l(l+1)\right)\gamma_1\\ \left(-d_u^2 - r^4\omega^2 + \frac{r^2}{T^2}l(l+1)\right)\gamma_1\omega\end{array}\right)dv\\
&=(\gamma_1,\gamma_2)^T\\
&= \Gamma,\end{align*}
where we have used $\eqref{disteqn}$. Thus, $K_{\omega}(\ham - \omega) = I$ on $C_0^{\infty}(-\infty,\uinf)^2$ and hence $K_{\omega} = (\ham - \omega)^{-1}$ on $\mathscr{D}(K_{\omega})$. Since $(\ham - \omega)^{-1}$ is bounded and $\mathscr{D}(K_{\omega})$ is dense, the claim follows.\end{proof}

We can now apply Stone's formula to $\ham$ to get, for each $\Psi \in \mathscr{H}$ that
\begin{align}\frac{1}{2}\left(P_{[a,b]} + P_{(a,b)}\right)\Psi(u) &= \lim_{\eps\searrow 0} \frac{1}{2\pi i} \int_a^b \left[(\ham - (\omega + i\eps))^{-1} - (\ham - (\omega -i\eps))^{-1}\right]\Psi(u)d\omega \notag\\
&=\lim_{\eps\searrow 0} \frac{1}{2\pi i} \int_a^b \left(\int_{-\infty}^{\uinf}\left(k_{\omega + i\eps}(u,v) - k_{\omega - i\eps}(u,v)\right)\Psi(v)dv\right)d\omega,\label{stoneformula2}\end{align}
where the limit is taken with respect to the norm in $\mathscr{H}$.
\section{A Representation Formula}
In this section we will obtain an integral representation formula for the solution of the Cauchy problem $\eqref{cp3}$ via $\eqref{stoneformula2}$. We begin first with a proposition:
\begin{prop}The Wronskian $w(\zetaa,\zetab)$ does not vanish for $\omega \in \R\setminus\{0\}$. This remains valid in the case $\omega = 0$ for the solutions $\zeta_{1,0}, \zeta_{2,0}$.\end{prop}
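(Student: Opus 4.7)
The plan is to use a Wronskian-conservation argument together with the reality of the ODE $(\ref{evalueequation2})$ when $\omega$ is real, a classical scattering-theoretic trick. Since the substitution $\eta = r\zeta$ yields $w(\zetaa, \zetab) = w(\etaa, \etab)$ (as noted just above the proposition), it suffices to show $w(\etaa, \etab) \neq 0$.

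Fix real $\omega \neq 0$ and suppose for contradiction that $w(\etaa, \etab) = 0$. Then $\etaa = c\,\etab$ for some $c \in \C \setminus \{0\}$. Because $W_l$ is real and $\omega \in \R$, the coefficients of $(\ref{evalueequation2})$ are real, so $\overline{\etaa}$ is also a solution, and hence the Wronskian $W(\etaa, \overline{\etaa}) := \etaa\,\overline{\etaa}' - \etaa'\,\overline{\etaa}$ is independent of $s$. Taking the real-axis limit from the lower half-plane, the boundary condition $(\ref{leftnegbcs})$ gives $\etaa \sim e^{i\omega s}$ and $\etaa' \sim i\omega\, e^{i\omega s}$ as $s \to -\infty$, so $W(\etaa, \overline{\etaa}) = -2i\omega$. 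On the other hand, at $s \to +\infty$, the assumed identity $\etaa = c\,\etab$ together with $(\ref{rightnegbcs})$ gives $\etaa \sim c\,e^{-i\omega s}$ and $\etaa' \sim -ic\omega\, e^{-i\omega s}$, so $W(\etaa, \overline{\etaa}) = 2i|c|^2\omega$. Equating the two expressions and using $\omega \neq 0$ forces $|c|^2 = -1$, the desired contradiction.

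For $\omega = 0$ the identity $W(\etaa,\overline{\etaa})$ degenerates to $0=0$, so a separate argument is required. The plan is to exploit the renormalized construction of the zero-frequency Jost solutions: the small-argument expansion of $H_\lambda^{(2)}(\omega s)$ combined with the bounds $(\ref{eta20bound})$--$(\ref{eta2bound})$ and the SSBH asymptotics $(\ref{eventhorizonasymptotics1})$--$(\ref{farfieldasymptotics1})$ yields explicit leading-order profiles for $\zeta_{1,0}$ at $s \to -\infty$ and $\zeta_{2,0}$ at $s \to +\infty$ (a power-law profile at spatial infinity and a regular profile at the horizon). A hypothetical proportionality $\zeta_{1,0} = c\,\zeta_{2,0}$ would propagate each of these favored profiles to the opposite end, which ODE comparison with the prescribed asymptotic behavior rules out. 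As a fallback, for $l \geq 1$ the potential $W_l$ is positive to leading order, so integrating the identity $-\eta'' + W_l \eta = 0$ against $\bar\eta$ and using the boundary behavior precludes a nontrivial bounded zero-energy solution.

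The main obstacle I anticipate is precisely the $\omega = 0$ case: the $\omega \neq 0$ analysis reduces to a short Wronskian computation exploiting the oscillatory boundary conditions, but the zero-energy case demands a hands-on asymptotic analysis at both the horizon and spatial infinity, together with the renormalization used to extend the Jost solutions continuously through $\omega = 0$.
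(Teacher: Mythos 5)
For $\omega\in\R\setminus\{0\}$ your argument is correct and complete, and it is a legitimate variant of what the paper does: you assume proportionality and evaluate the conserved Wronskian $W(\etaa,\overline{\etaa})$ at the two ends, getting $-2i\omega$ versus $2i|c|^2\omega$. The paper instead works with the logarithmic derivatives $y_j=\eta_{j,\omega}'/\eta_{j,\omega}$ and shows $\Im(y_1)$ and $\Im(y_2)$ have opposite (constant, nonzero) signs, so that $w(\etaa,\etab)=\etaa\etab(y_2-y_1)\neq 0$. Since $w(\Re\eta_j,\Im\eta_j)=\tfrac{i}{2}W(\eta_j,\overline{\eta_j})$, both arguments rest on the same conserved quantity; yours is more direct, while the paper's avoids ever invoking proportionality and gives the nonvanishing of $\etaa,\etab$ themselves along the way.

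The $\omega=0$ case, however, is where your proposal has a genuine gap: what you offer is a plan, not a proof. The primary route (compute leading-order profiles of the renormalized Jost solutions at each end and show a proportionality cannot propagate) is never executed — no profiles are derived and no comparison is made. The fallback is worse, because it would fail as stated: integrating $-\eta''+W_l\eta=0$ against $\bar\eta$ requires $W_l\geq 0$ \emph{globally}, but in the $s$-variable $W_l$ contains the term $-\frac{1}{rT^2K^2}\left(\frac{T'}{T}+\frac{K'}{K}\right)$, whose sign is controlled by the hypotheses only near the horizon and near infinity; nothing in the SSBH assumptions prevents $W_l$ from going negative in the intermediate region. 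The paper's resolution is precisely to dodge this: it returns to the $\zeta$-equation in the $u$-variable, $\zeta''=\frac{r^2}{T^2}\,l(l+1)\,\zeta$, where the potential is manifestly nonnegative, so the real and imaginary parts of $\zeta_{1,0}$ and $\zeta_{2,0}$ are convex (or concave); combined with the asymptotic boundary conditions at $u\to-\infty$ and $u\nearrow 0$, convexity forces the two solutions to be linearly independent. If you want to salvage your approach, switch coordinates as the paper does rather than trying to make a positivity or profile-propagation argument work for $W_l$ in the $s$-variable.
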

\begin{proof}We first note, again, that $w(\etaa,\etab) = w(\zetaa,\zetab)$ and it therefore suffices to the lemma for the $\eta$ solutions or the $\zeta$ solutions. For the $\omega = 0$ case, we observe that $\zeta_{1,0},\zeta_{2,0}$ solve the ODE 
\begin{equation}\zeta''(u) = \frac{r^2}{T^2}l(l+1)\zeta(u),\label{omega0equation}\end{equation}
subject to the asymptotic boundary conditions
\begin{align*}&\lim_{u\nearrow \uinf} s(u)^l \zeta_0 r(u) = (-i)^l(2l-1)!!\\
&\lim_{u\to -\infty} \zeta_{1,0}(u)r(u) = 1.\end{align*}
Thus, equation $\eqref{omega0equation}$ with the asymptotic boundary conditions implies that the solution $\zeta_{1,0}$ is convex. Similarly, since the solution $\zeta_{2,0}$ must be either real or purely imaginary depending on whether $l$ is odd or even, $\eqref{omega0equation}$ implies that either $\Re(\zeta_{2,0})$ or $\Im(\zeta_{2,0})$ is strictly convex or concave (again depending on $l$). In any case, this observation coupled with the asymptotic boundary conditions imply that $\zeta_{1,0}$ and $\zeta_{2,0}$ are linearly independent and thus that $w(\zeta_{1,0},\zeta_{2,0})\neq 0$.

In the case $\omega \in \R\setminus\{0\}$, it's easy to show, using the asymptotic boundary conditions $\eqref{leftposbcs} - \eqref{rightnegbcs}$, that $w(\Re(\eta_{j,\omega}),\Im(\eta_{j,\omega})) \neq 0$ for $j=1,2$. Next, for $j\in \{1,2\}$, consider $y_j:= \frac{\eta_{j,\omega}'}{\eta_{j\omega}}$. An easy calculation shows that 
\[\Im(y_j) = \frac{w(\Re(\eta_{j,\omega}),\Im(\eta_{j,\omega}))}{|\eta_{j,\omega}|^2}.\] Note that $y_j$ is well-defined since $w(\Re(\eta_{j,\omega}),\Im(\eta_{j,\omega})) \neq 0$. Thus, $\Im(y_j) \neq 0$ and either $\Im(y_j)>0$ or $\Im(y_j) <0$ by continuity for all $s\in (-\infty,\infty)$. Moreover, using the boundary conditions again, it's easy to show that $\Im(y_1)$ and $\Im(y_2)$ have different signs. Therefore
\begin{equation*}w(\etaa,\etab) = \etaa\etab(y_2 - y_1) \neq 0,\end{equation*}
and hence, $w(\zetaa,\zetab)\neq 0$.\end{proof}
As a consequence, we have
\begin{cor}The function $h_{\omega}(u,v)$ defined in $\eqref{defofh}$ is continuous in $(\omega,u,v)$ for $\omega \in \{\Im(\omega)\leq0\}$ and $(u,v)\in (-\infty,\uinf)^2$.\end{cor}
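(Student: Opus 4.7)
The strategy is to write $h_\omega(u,v)$ as a ratio whose numerator is a product of the fundamental solutions $\zeta_{1,\omega}(u)$ and $\zeta_{2,\omega}(v)$ (with the two branches paired according to whether $u\leq v$ or $u>v$) and whose denominator is the Wronskian $w(\zeta_{1,\omega},\zeta_{2,\omega})$, then check joint continuity of the numerator, continuity and non-vanishing of the denominator, and matching of the two branches along the diagonal $u=v$.

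First, I would establish joint continuity of $\zeta_{j,\omega}(u)$ in $(\omega,u)$ on $\{\Im\omega\leq 0\}\times(-\infty,\uinf)$. The Neumann series $\eqref{eta2series}$ for $\eta^2$ has terms that are each continuous in $(\omega,s)$, and the bound $\eqref{eta2nbound}$ together with $\|Q\|_{L^1}<\infty$ yields Weierstrass $M$-test convergence uniformly on compact subsets of $\{\Im\omega\leq 0,\omega\neq 0\}\times\R$, so the sum is continuous there. The parallel construction at $s=-\infty$ gives the same for $\eta^1$. The apparent singularity at $\omega=0$ is handled by switching to the rescaled initial function $\eta^{2,0}_0$ (and its analogue for $\eta^1$), which yields solutions continuous up to and including $\omega=0$. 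The key observation is that $h_\omega$ is separately homogeneous of degree zero in $\zeta_{1,\omega}$ and in $\zeta_{2,\omega}$: any nonzero scalar factor $c(\omega)$ applied to either fundamental solution multiplies both the numerator and the Wronskian in $\eqref{defofh}$ by $c(\omega)$, and so cancels. Therefore, near $\omega=0$ we may verify continuity of $h_\omega$ using the rescaled solutions and then transfer the conclusion back to the original normalization. Passing from $\eta$ to $\zeta$ via $\eta(s)=r(u)\zeta(u)$ preserves joint continuity since $u\mapsto s(u)$ and $u\mapsto r(u)$ are smooth on $(-\infty,\uinf)$.

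Second, continuity in $\omega$ of the Wronskian $w(\zeta_{1,\omega},\zeta_{2,\omega})$ follows because the first-order derivatives $\zeta'_{j,\omega}$ are continuous in $(\omega,u)$ by the estimates $\eqref{deta20bound}$--$\eqref{deta2minusdeta20bound}$ together with uniform convergence of the term-by-term differentiated Neumann series (again using the rescaled solutions near $\omega=0$); since the Wronskian is $u$-independent, continuity in $\omega$ results. Non-vanishing of $w$ for $\Im\omega<0$ was noted earlier (otherwise $\zeta_{1,\omega}$ and $\zeta_{2,\omega}$ would span a common direction in the kernel of $(\ham-\omega)^{-1}$, which is trivial by self-adjointness), and non-vanishing on the whole real axis, including $\omega=0$, is precisely the preceding proposition.

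Finally, the two branches of $\eqref{defofh}$ agree on the diagonal: setting $u=v$ in either piece gives $-\zeta_{1,\omega}(u)\zeta_{2,\omega}(u)/w(\zeta_{1,\omega},\zeta_{2,\omega})$, so $h_\omega$ is continuous across $u=v$. Combining these steps, $h_\omega(u,v)$ is a ratio of jointly continuous functions on $\{\Im\omega\leq 0\}\times(-\infty,\uinf)^2$ with continuous and non-vanishing denominator, hence jointly continuous. The principal technical obstacle is the removable singularity at $\omega=0$; the scaling invariance of the expression in $\eqref{defofh}$ is the observation that neutralizes it, since the problematic $\omega^{\lambda-1/2}$-type factors cancel between numerator and Wronskian.
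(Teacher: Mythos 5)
Your proposal is correct and follows essentially the same route as the paper: continuity of the fundamental solutions in $(\omega,u)$, the degree-zero homogeneity of $h_\omega$ in each solution (which is exactly the paper's observation that $h_\omega$ is invariant under substituting $\omega^l\zeta_{2,\omega}$ for $\zeta_{2,\omega}$, neutralizing the removable singularity at $\omega=0$), and the non-vanishing of the Wronskian from the preceding proposition. You simply spell out in detail the steps the paper compresses into two sentences.
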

\begin{proof}Note that the analytic dependence on $\omega$ of the ODE $\eqref{evalueequation}$ guarantees that the $\zeta$ solutions depend at least continuously on $\omega$ for $\Im(\omega)\leq 0$. Moreover, since $h_{\omega}(u,v)$ is invariant under the substitution $\omega^l\zetab$ for $\zetab$, the previous proposition yields the claim.\end{proof}
Next, observe that the definitions of $\eta^1,\eta^2$ for $\Im \omega>0$ imply that $\overline{h_{\omega}(u,v)} = h_{\bar{\omega}}(u,v)$, and hence, $\overline{k_{\omega}(u,v)} = k_{\bar{\omega}}(u,v)$. We can then simplify $\eqref{stoneformula2}$ to read
\begin{equation}\frac{1}{2}\left(P_{[a,b]} + P_{(a,b)}\right)\Psi(u) = \lim_{\eps\searrow 0} -\frac{1}{\pi}\int_a^b\left(\int_{-\infty}^{\uinf}\Im(k_{\omega - i\eps}(u,v))\Psi(v)dv\right)d\omega,\label{stoneformula3}\end{equation}
where this converges in $\mathscr{H}$-norm. Since the integrand is continuous and $\Psi \in C_0^{\infty}(-\infty,\uinf)^2$, for any bounded interval $[a,b]$ we are integrating a continuous integrand over a compact region, and if we consider this limit as a pointwise limit in $u$, then for any fixed $u$ we may exchange the limit and the integration (by Lebesgue's Dominated Convergence Theorem). This observation coupled with the norm convergence yields
\begin{equation}\frac{1}{2}\left(P_{[a,b]} + P_{(a,b)}\right)\Psi(u) = -\frac{1}{\pi}\int_a^b\left(\int_{\text{supp} \Psi} \Im(k_{\omega}(u,v))\Psi(v)dv\right)d\omega.\label{stoneformula4}\end{equation}
Note that this yields that $P_{\{a\}} = 0$ for any $a\in \R$, and thus that $P_{[a,b]} = P_{(a,b)}$. This in turn implies that the spectrum $\sigma(\ham)$ is absolutely continuous. In particular, this yields
\begin{equation}P_{(a,b)}\Psi(u) = -\frac{1}{\pi}\int_a^b\left(\int_{\text{supp} \Psi} \Im(k_{\omega}(u,v))\Psi(v)dv\right)d\omega\label{stoneformula5}\end{equation}
for any $\Psi\in C_0^{\infty}(-\infty,\uinf)^2$ and any bounded interval $(a,b)$.

We would next like to rewrite the integrand in $\eqref{stoneformula5}$ in a more useful form. To this end, let us observe that for $\omega \in \R\setminus\{0\}$, the pair $\{\zetaa, \overline{\zetaa}\}$ forms a fundamental system for the ODE $\eqref{evalueequation}$. Therefore, there exist constants (constant in $u,v$) $\lambda(\omega),\mu(\omega)$ so that 
\begin{equation}\zetab(u) = \lambda(\omega)\zetaa(u) + \mu(\omega)\overline{\zetaa}(u)\label{transcoefficients}\end{equation}
for $\omega \in \R\setminus\{0\}$. From the boundary conditions $\eqref{leftposbcs} - \eqref{rightnegbcs}$ (and the fact that $w(\etaa,\etab) = w(\zetaa,\zetab)$), it's easy to see that $w(\zetaa,\zetab) = -2i\omega\mu(\omega)$. Now, let us make the following definitions
\begin{equation}\gammaa(u) = \Re(\zetaa(u)), \text{ and } \gammab(u) = \Im(\zetaa(u)),\label{gammadef}\end{equation} 
as well as
\begin{equation}\Gammaa(u) = (\gamma_{a,\omega}(u),\omega\gamma_{a,\omega}(u))^T.\label{Gammadef}\end{equation}
Then for $\omega \neq 0$, an easy calculation shows that 
\begin{equation}\Im(h_{\omega}(u,v)) = -\frac{1}{2\omega}\sum_{a,b =1}^2\alpha_{ab}(\omega)\gamma_{a,\omega}(u)\gamma_{b,\omega}(v),\label{simpleh}\end{equation}
where the coefficients are given by
\begin{equation}\alpha_{11}(w) = 1 + \Re\left(\frac{\lambda(\omega)}{\mu(\omega)}\right), \alpha_{22}(\omega) = 1 - \Re\left(\frac{\lambda(\omega)}{\mu(\omega)}\right), \text{ and } \alpha_{12}(\omega) = \alpha_{21}(\omega) = -\Im\left(\frac{\lambda(\omega)}{\mu(\omega)}\right).\label{alphadef}\end{equation}
By continuity, this expression extends to $\omega = 0$, and we can therefore write
\begin{align}\int_{\text{supp}\Psi}\Im(k_{\omega}(u,v))\Psi(v)dv &= -\frac{1}{2\omega}\int_{\text{supp}\Psi} r^4 \sum_{a,b=1}^2 \alpha_{ab}(\omega)\gamma_{a,\omega}(u)\gamma_{b,\omega}(v)\left(\begin{array}{cc} \omega & 1 \\ \omega^2 & \omega \end{array}\right)\Psi(v)dv \notag\\
&= -\frac{1}{2\omega^2}\sum_{a,b=1}^2 \alpha_{ab}(\omega)\Gammaa(u)\int_{\text{supp}\Psi}(\omega^2\gammab(v)\psi_1(v) + \omega\gammab(v)\psi_2(v))r^4dv \notag\\
&=-\frac{1}{2\omega^2}\sum_{a,b=1}^2\alpha_{ab}(\omega)\Gammaa(u)\langle \Gammab,\Psi\rangle_l,\label{rewriteintegrand}\end{align}
where we have used the fact that $\omega^2 r^4\gammab(v) = (-d_v^2 + r^4V_l(v))\gammab(v)$ and we integrated by parts. We also note that the inner product above is well-defined because $\Psi \in C_0^{\infty}(-\infty,\uinf)^2$.

We use the above argument to finally obtain a representation formula for the solution $\Psi^{lm}$ of the Cauchy problem $\eqref{cp3}$.
\begin{prop}The solution $\Psi^{lm}$ of the Cauchy problem $\eqref{cp3}$ can be represented as 
\begin{align}\Psi^{lm}(t,u) &= e^{-it\ham}\Psi_0^{lm}(u)\notag\\
&= \frac{1}{2\pi}\int_{\R}e^{-i\omega t} \frac{1}{\omega^2}\sum_{a,b=1}^2\alpha_{ab}(\omega)\Gammaa(u)\langle \Gammab, \Psi_0\rangle_ld\omega,\label{repformula}\end{align}
where the integral converges in norm in $\mathscr{H}$.\end{prop}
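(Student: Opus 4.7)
The plan is to derive the representation formula in two main steps: first establish a $t = 0$ spectral identity by combining Stone's formula \eqref{stoneformula5} with the explicit simplification \eqref{rewriteintegrand}, and then introduce the time evolution via the functional calculus for the self-adjoint operator $\ham$.

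To start, I substitute \eqref{rewriteintegrand} into \eqref{stoneformula5} applied to $\Psi = \Psi_0^{lm} \in C_0^{\infty}(-\infty,\uinf)^2$. Recognizing that the factor $-\frac{1}{\pi}$ from Stone's formula combines with the $-\frac{1}{2\omega^2}$ from the simplified integrand, I obtain, for any bounded interval $(A,B)\subset \R$,
\[
P_{(A,B)}\Psi_0^{lm}(u) \;=\; \frac{1}{2\pi}\int_A^B \frac{1}{\omega^2}\sum_{a,b=1}^2 \alpha_{ab}(\omega)\,\Gammaa(u)\,\langle \Gammab, \Psi_0^{lm}\rangle_l\, d\omega.
\]
Since $\ham$ is self-adjoint with purely absolutely continuous spectrum (observed after \eqref{stoneformula4}), the spectral projections satisfy $P_{(A,B)} \to I$ strongly in $\mathscr{H}$ as $A\to-\infty$, $B\to\infty$. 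Passing to this limit yields the $t = 0$ spectral representation of $\Psi_0^{lm}$ in $\mathscr{H}$-norm, which effectively identifies the spectral density of $\ham$ acting against the compactly supported data.

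Next, I introduce the time dependence. By Stone's theorem (Theorem \ref{st}), the group $U(t) = e^{-it\ham}$ agrees with the solution operator of \eqref{cp3}, so $\Psi^{lm}(t,u) = e^{-it\ham}\Psi_0^{lm}(u)$. On each spectral component the group $U(t)$ acts as multiplication by $e^{-it\omega}$, and since $U(t)$ is unitary it interchanges with the $\mathscr{H}$-norm convergent spectral integral. Inserting the factor $e^{-i\omega t}$ inside the integral then produces
\[
e^{-it\ham}\Psi_0^{lm}(u) \;=\; \frac{1}{2\pi}\int_\R e^{-i\omega t}\frac{1}{\omega^2}\sum_{a,b=1}^2 \alpha_{ab}(\omega)\,\Gammaa(u)\,\langle \Gammab, \Psi_0^{lm}\rangle_l\, d\omega,
\]
which is exactly the claimed formula \eqref{repformula}.

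The main obstacle I expect is justifying $\mathscr{H}$-norm convergence of the integral, particularly near $\omega = 0$ where the factor $\omega^{-2}$ is potentially singular. The resolution must come from the behavior of the coefficients $\alpha_{ab}(\omega)$ at the origin: because $\zetaa,\zetab$ extend continuously to $\omega = 0$ through the rescaling used in the construction of $\eta^{2,0}$, and because $w(\zetaa,\zetab) = -2i\omega\mu(\omega)$ carries an explicit factor of $\omega$, the ratio $\lambda(\omega)/\mu(\omega)$ appearing in \eqref{alphadef} and hence the coefficients $\alpha_{ab}(\omega)$ behave in a controlled manner, so that $\omega^{-2}\alpha_{ab}(\omega)$ combined with $\langle \Gammab,\Psi_0^{lm}\rangle_l$ (which itself tends to vanish suitably at $\omega = 0$ for generic smooth compactly supported data) gives an integrable expression. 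At large $|\omega|$, integration by parts in $\langle \Gammab,\Psi_0^{lm}\rangle_l$ using the ODE \eqref{evalueequation} together with the compact support and smoothness of $\Psi_0^{lm}$ produces arbitrarily fast polynomial decay, which dominates any polynomial growth of $\alpha_{ab}(\omega)$ and $\Gammaa(u)$. These estimates together deliver absolute convergence in $\mathscr{H}$ uniformly in $t$ and complete the justification.
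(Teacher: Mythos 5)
Your proposal is correct and follows essentially the same route as the paper: substitute \eqref{rewriteintegrand} into Stone's formula \eqref{stoneformula5}, use the spectral theorem/functional calculus to insert the factor $e^{-i\omega t}$, and obtain $\mathscr{H}$-norm convergence from the strong convergence $P_{(-n,n)}\to I$ together with the unitarity of $e^{-it\ham}$. Your closing discussion of integrability near $\omega=0$ and for $|\omega|\gg 1$ is not needed for the norm-convergence claim here (it is the content of the subsequent proposition on pointwise $L^1$ convergence), but it does no harm.
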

\begin{proof}Using $\eqref{rewriteintegrand}$ in $\eqref{stoneformula5}$ and applying the spectral theorem, for any $n\in \N$ we have
\begin{equation*}e^{-it\ham}P_{(-n,n)}\Psi_0(u) = \frac{1}{2\pi}\int_{-n}^ne^{-i\omega t}\frac{1}{\omega^2}\sum_{a,b=1}^2\alpha_{ab}(\omega)\Gammaa(u)\langle \Gammab,\Psi_0^{lm}\rangle_ld\omega.\end{equation*}
Furthermore, since $e^{-it\ham}$ is unitary, we have 
\begin{equation*}e^{-it\ham}P_{(-n,n)}\Psi_0 \to e^{-it\ham}\Psi_0\end{equation*}
in the $\mathscr{H}$-norm as $n\to \infty$.\end{proof}
\section{Decay}
We now obtain decay results from the representation formula $\eqref{repformula}$. To this end, we state a proposition.
\begin{prop}For fixed $u\in (-\infty,\uinf)$, the integrand in the representation $\eqref{repformula}$ is in $L^1(\R,\C^2)$ as a function of $\omega$. In particular, the representation $\eqref{repformula}$ holds pointwise for each $u\in (-\infty,\uinf)$.\end{prop}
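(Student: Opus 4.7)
The strategy is to establish pointwise integrability in $\omega$ by splitting $\R$ into three regions: a neighborhood $[-\delta,\delta]$ of the origin, a compact annulus $\delta\leq|\omega|\leq R$, and the tails $|\omega|\geq R$. On the compact annulus, the analyticity of $\eta_{j,\omega}$ in $\omega$ established in the construction of Section 3 gives continuity of $\zeta_{j,\omega}$ and hence of $\gamma_{a,\omega}(u)$; the non-vanishing Wronskian from the preceding proposition gives continuity of the coefficients $\alpha_{ab}(\omega)$; and the compact support of $\Psi_0$ makes $\langle \Gammab,\Psi_0\rangle_l$ continuous as well. The integrand is therefore uniformly bounded on this annulus and trivially integrable there.

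Near $\omega=0$ the apparent $\omega^{-2}$ singularity is removable. By the preceding corollary, $h_{\omega}(u,v)$ extends continuously to $\omega=0$, which via $\eqref{simpleh}$ forces the combination $\omega^{-1}\sum_{a,b}\alpha_{ab}(\omega)\gamma_{a,\omega}(u)\gamma_{b,\omega}(v)$ to remain bounded as $\omega\to 0$. The remaining factor of $\omega^{-1}$ is absorbed by the identity $\langle \Gammab,\Psi_0\rangle_l = \int r^4(\omega^2 \gamma_{b,\omega}(\Psi_0)_1 + \omega\gamma_{b,\omega}(\Psi_0)_2)\,dv$ established in the derivation of $\eqref{rewriteintegrand}$, which carries an overall factor of $\omega$. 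The product of these two cancellations makes the integrand continuous across $\omega=0$ and hence integrable on $[-\delta,\delta]$. This small-$\omega$ analysis is the main obstacle, since it requires carefully tracking cancellations through several layers of definitions.

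For the tails $|\omega|\geq R$, I would apply repeated integration by parts in $v$ inside $\langle \Gammab,\Psi_0\rangle_l$, exploiting the eigenvalue equation $\eqref{evalueequation}$ satisfied by $\gamma_{b,\omega}$: the operator $-\D_v^2 + r^2 T^{-2} l(l+1)$ applied to $\gamma_{b,\omega}$ equals $r^4\omega^2\gamma_{b,\omega}$, so each such replacement can be undone by moving $-\D_v^2 + r^2 T^{-2} l(l+1)$ onto the smooth compactly supported $\Psi_0$, producing a gain of $\omega^{-2}$ per iteration. Iterating $N$ times yields decay of order $|\omega|^{-N}$ (up to finite powers of $\omega$ from $\Gammaa(u)$), and combining this with the bound $\eqref{eta2bound}$ on $\gamma_{a,\omega}$ together with the at-most-polynomial growth of $\alpha_{ab}(\omega)$ (which follows from $w(\zetaa,\zetab)=-2i\omega\mu(\omega)$ being non-vanishing and well-controlled) delivers integrable decay at infinity. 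Once the $L^1(\R,\C^2)$ bound is in hand, the pointwise form of $\eqref{repformula}$ follows from the $\mathscr{H}$-norm convergence of the truncated integrals $\int_{-n}^n$ in the preceding proposition combined with Lebesgue's dominated convergence theorem.
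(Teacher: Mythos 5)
Your proposal is correct and takes essentially the same route as the paper: the decisive ingredient in both is the repeated integration by parts in $\langle \Gamma^b_{\omega},\Psi_0\rangle_l$ via the eigenvalue equation $\eqref{evalueequation}$ to gain arbitrary polynomial decay in $\omega$, combined with the large-$\omega$ perturbation-series bound on the $\eta$ solutions and the control of $\alpha_{ab}(\omega)$ through the Wronskian asymptotics $w=2i\omega\mu(\omega)=2i\omega+O(1)$. The paper collapses your first two regions into the single remark that the integrand is continuous (the removability at $\omega=0$ having already been built into $\eqref{simpleh}$ and $\eqref{rewriteintegrand}$), so your explicit small-$\omega$ cancellation is just an unpacking of that remark; the only slip is citing $\eqref{eta2bound}$ for $\gamma_{a,\omega}$, which is built from $\eta^1$ rather than $\eta^2$, so the relevant estimate is the analogous one for $\eta^1$.
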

\begin{proof}Since the integrand is continuous, we need only to analyze it for $|\omega|\gg 1$. We must therefore investigate the asymptotic behavior of $\zetaa(u)$ for $|\omega|\gg 1$, but according to the definition $\eqref{etadef}$, we shall first analyze the asymptotic behavior of $\etaa$. To do this, we refer to the proof of existence of $\etaa$ in Theorem 5.1 of \cite{thecauchyproblemforthewaveequationintheschwarzschildgeometry} where $\etaa$ is constructed via a perturbation series
\begin{equation}\etaa(s) = \sum_{k=0}^{\infty}\etaa^{(k)}(s),\label{etaseries}\end{equation}
where $\etaa^{(0)}(s) = e^{i\omega s}$ and 
\[\etaa^{(k+1)}(s) = -\int_{-\infty}^s\frac{1}{\omega}\sin(\omega(s-\tilde{s}))W_l(s)\etaa^{(k)}(\tilde{s})d\tilde{s}.\]
We then have the following estimate
\[\left|\etaa^{(k+1)}(s)\right| \leq \int_{-\infty}^s \frac{1}{|\omega|}|W_l(\tilde{s})|\cdot|\etaa^{(k)}(\tilde{s})|d\tilde{s},\]
and if we assume, by way of induction, that 
\begin{equation}\left|\etaa^{(k)}(s)\right|\leq \frac{1}{k!}\left(\int_{-\infty}^s \frac{1}{|\omega|}|W_l(\tilde{s})|d\tilde{s}\right)^k,\label{ih}\end{equation}
we then have 
\begin{align*}\left|\etaa^{(k+1)}(s)\right| &\leq \int_{-\infty}^s \frac{1}{|\omega|}|W_l(\tilde{s})|\frac{1}{k!}\left(\int_{-\infty}^{\tilde{s}} \frac{1}{|\omega|}|W_l(\hat{s})|d\hat{s}\right)^k\\
&=\int_{-\infty}^s\frac{d}{d\tilde{s}}\left[\frac{1}{(k+1)!}\int_{-\infty}^{\tilde{s}}\frac{1}{|\omega|}|W_l(\hat{s})|d\hat{s}\right]^{k+1}d\tilde{s}\\
&=\frac{1}{(k+1)!}\left(\int_{-\infty}^s \frac{1}{|\omega|}|W_l(\tilde{s})|d\tilde{s}\right)^{k+1}.\end{align*}
Since the induction hypothesis $\eqref{ih}$ also holds for $k=0$, $\eqref{ih}$ holds for each $k \in \N$.

Therefore, we have the following estimate on $\etaa$ from $\eqref{etaseries}$:
\begin{align*}|\etaa(s)| &\leq \sum_{k=0}^{\infty}\frac{1}{k!}\left(\int_{-\infty}^s \frac{1}{|\omega|}|W_l(\tilde{s})|d\tilde{s}\right)^k \\
&\leq e^{\frac{1}{|\omega|}\|W_l\|_{L^1}} \\
&\leq 1 + O\left(\frac{1}{|\omega|}\right),\end{align*}
for $|\omega|\gg 1$, since we know $\|W_l\|_{L^1} < \infty$. Next we analyze $\langle \Psi_0^{lm},\Gammab\rangle_l$. We have
\[\langle \Psi_0^{lm}, \Gammab\rangle_l = \int_{\text{supp}\Psi_0^{lm}}r^4(\psi_0^{lm})_2\omega\gammab(u) + (\psi_0^{lm})_1'\gammab'(u) + \frac{r^2}{T^2}l(l+1)(\psi_0^{lm})_1\gammab du,\]
but since $\gammab$ solves the ODE $\eqref{evalueequation}$, we have 
\[\gammab = \frac{1}{\omega^2r^4}\left(-\gammab'' + \frac{r^2}{T^2}l(l+1)\gammab\right).\]
Substituting this in the expression above and integrating by parts twice yields
\begin{align*}\langle \Psi_0^{lm}, \Gammab\rangle_l &= \frac{1}{\omega^2}\int_{\text{supp}\Psi_0^{lm}}-\gammab(u)\left(\omega(\psi_0^{lm})_2  - \frac{1}{r^4}(\psi_0^{lm})_1'' + \frac{l(l+1)}{r^2T^2}(\psi_0^{lm})_1\right)'' \\
&\hspace{.3in}+ \gammab(u)\frac{r^2}{T^2}l(l+1)\left(\omega (\psi_0^{lm})_2 - \frac{(\psi_0^{lm})_1''}{r^4} + \frac{l(l+1)}{r^2T^2}\right)du\end{align*}
Since $r,T$ are smooth and $\Psi_0^{lm}\in C_0^{\infty}(-\infty,\uinf)^2$, we can iterate this argument as many times as we like to obtain arbitrary polynomial decay in $\omega$.

It remains for us to analyze the coefficients $\alpha_{ab}(\omega)$. To that end, let us consider $\lambda(\omega), \mu(\omega)$; these satisfy 
\[w(\etab,\etaa) = 2i\omega\mu(\omega) \text{ and } w(\etab,\overline{\etaa}) = 2i\omega\lambda(\omega).\]
In each case, one proceeds exactly as in \cite{thecauchyproblemforthewaveequationintheschwarzschildgeometry} (and uses the fact $w(\zetab,\zetaa) = w(\etab,\etaa)$) to find $w(\etab,\etaa) = 2i\omega + O(1)$ and $w(\etab,\overline{\etaa}) = O(1)$, which implies 
\[\mu(\omega) = 1 + O\left(\frac{1}{\omega}\right) \text{ and } \lambda(\omega) = O(1)\]
for $\omega$ large. Thus, the coefficients $\alpha_{ab}$ remain at least bounded.

Putting all of this together, we have shown that the integrand in the representation $\eqref{repformula}$ is in $L^1(\R,\C^2)$. Furthermore, this implies that the integral converges pointwise, and thus that the representation $\eqref{repformula}$ holds for each $u\in (-\infty,\uinf)$.\end{proof}

As a simple corollary we now obtain decay:

\begin{cor}The solution $\Psi^{lm}$ of the reduced Cauchy problem $\eqref{cp3}$ vanishes as $t\to \infty$ for fixed $u \in (-\infty,\uinf)$.\label{modaldecay}\end{cor}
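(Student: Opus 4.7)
The plan is to apply the Riemann--Lebesgue lemma to the pointwise representation
\[\Psi^{lm}(t,u) = \frac{1}{2\pi}\int_{\R}e^{-i\omega t}\, F_u(\omega)\, d\omega,\qquad F_u(\omega) := \frac{1}{\omega^2}\sum_{a,b=1}^2\alpha_{ab}(\omega)\Gammaa(u)\langle \Gammab,\Psi_0^{lm}\rangle_l,\]
which the preceding proposition guarantees holds for each fixed $u\in(-\infty,0)$. Since that proposition also establishes that $F_u\in L^1(\R,\C^2)$, the classical Riemann--Lebesgue lemma yields $\Psi^{lm}(t,u)\to 0$ as $t\to\infty$ for each such $u$, which is exactly the assertion of the corollary.

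Concretely, I would first invoke the previous proposition to justify interpreting the integral as an ordinary Lebesgue integral (rather than merely a norm limit in $\mathscr{H}$). Then I would write $\Psi^{lm}(t,u) = \tfrac{1}{2\pi}\widehat{F_u}(t)$, where $\widehat{\cdot}$ denotes the Fourier transform, and apply Riemann--Lebesgue componentwise to the two entries of $F_u$. No further control of the decay \emph{rate} is claimed, so no stationary-phase or contour-deformation analysis is needed at this stage.

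The only point that needs a little care is the apparent $1/\omega^2$ singularity at $\omega = 0$, since $F_u$ a priori looks non-integrable there. This is however already absorbed into the $L^1$ conclusion of the previous proposition: the solutions $\zetaa$ (and hence $\gamma_{a,\omega}$, $\Gammaa$) extend continuously to $\omega = 0$ via the removable-singularity argument earlier in the paper, and the coefficients $\alpha_{ab}(\omega)$ together with $\langle\Gammab,\Psi_0^{lm}\rangle_l$ furnish the cancellations that render $F_u$ integrable near zero. Thus in the present proof I would simply cite the preceding proposition for $F_u\in L^1$ and avoid re-deriving the low-frequency behavior.

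The main (and essentially only) obstacle is making sure the hypotheses of Riemann--Lebesgue are literally met for a fixed $u$; once this is granted, the decay statement is immediate. No uniformity in $u$ is asserted in the corollary, which simplifies matters considerably and is why the conclusion is stated as pointwise decay for fixed $u\in(-\infty,0)$ rather than as $L^\infty_{\mathrm{loc}}$ decay (that stronger statement, promised in the abstract, presumably requires a separate argument summing the modal decays in $l$).
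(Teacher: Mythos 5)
Your proposal is correct and matches the paper's own proof: both cite the preceding proposition for the $L^1(\R,\C^2)$ integrability of the integrand (including the behavior near $\omega=0$) and then apply the Riemann--Lebesgue lemma to the pointwise Fourier representation to conclude $\Psi^{lm}(t,u)\to 0$ for fixed $u$. The additional remarks on the removable singularity at $\omega=0$ and the lack of uniformity in $u$ are consistent with, though more explicit than, the paper's brief argument.
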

\begin{proof}According to the representation formula $\eqref{repformula}$ and the above theorem, $\Psi^{lm}$ is the Fourier transform of an absolutely integrable function. Then by the Riemann-Lebesgue lemma, for fixed $u$, $\Psi^{lm}(t,u) \to 0$ as $t \to \infty$.\end{proof}

Our next goal is to show decay of the solution $\Psi$ of the problem $\eqref{cp2}$. By the uniqueness and convergence of the spherical harmonic decomposition, one obtains a solution of $\eqref{cp2}$ from solutions of $\eqref{cp3}$ via $\eqref{spharmdecomp}$ and vice versa. In particular, this implies that the solution $\Psi$ of the problem $\eqref{cp2}$ has the representation
\begin{equation}\Psi(t,u,\theta,\phi) = \sum_{l=0}^{\infty}\sum_{|m|\leq l} e^{-it\ham}\Psi_0^{lm}(t,u)Y^{lm}(\theta,\phi).\label{repformula2}\end{equation}

Finally, we prove the main theorem of this paper:
\begin{thm}The Cauchy problem $\eqref{cp}$ in the geometry of a generalized Schwarzschild black hole has a unique, smooth, globally defined solution that is compactly supported for all times $t$. Moreover, this solution tends to zero for fixed $(u,\theta,\phi)$ as $t\to \infty$.\label{mainthm}\end{thm}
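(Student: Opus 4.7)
The plan is to assemble the pieces already developed: Theorem~\ref{existenceanduniqueness} takes care of the existence, uniqueness, smoothness, global definition, and compact support of $\Psi$, so the only genuine content that remains is the pointwise decay.

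To establish decay, I would fix $(u,\theta,\phi) \in (-\infty,\uinf) \times S^2$ and start from the representation \eqref{repformula2}:
\[ \Psi(t,u,\theta,\phi) = \sum_{l=0}^\infty \sum_{|m|\leq l} \Psi^{lm}(t,u) Y^{lm}(\theta,\phi). \]
By Corollary~\ref{modaldecay}, each mode satisfies $\Psi^{lm}(t,u) \to 0$ as $t \to \infty$. The goal is therefore to interchange the $t \to \infty$ limit with the double sum $\sum_l \sum_m$. This is the main obstacle, and it reduces to producing a uniform-in-$t$ bound on the tail $\sum_{l>L}\sum_{|m|\leq l}\Psi^{lm}(t,u)Y^{lm}(\theta,\phi)$ that tends to zero as $L\to\infty$.

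To produce this uniform tail bound, I would exploit the fact that $\Delta_{S^2}$ commutes with the wave operator in \eqref{waveequation2}, since its coefficient $r^2/T^2$ depends only on $u$. Consequently, for every $k \in \mathbb{N}$, the function $\Delta_{S^2}^k \Psi$ solves the Cauchy problem \eqref{cp2} with initial data $\Delta_{S^2}^k \Psi_0 \in C_0^\infty$. Its energy is conserved, and since the spherical harmonic coefficient of $\Delta_{S^2}^k \Psi$ is $(-l(l+1))^k \Psi^{lm}(t,u)$, energy conservation yields the uniform-in-$t$ bound
\[ \sum_{l,m} (l(l+1))^{2k} E_l(\Psi^{lm}(t)) \leq E(\Delta_{S^2}^k \Psi_0) < \infty, \]
where finiteness of the right-hand side uses the smoothness of $\Psi_0$ on $S^2$ (whose spherical harmonic coefficients decay faster than any power of $l$).

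Combining these bounds with the addition theorem $\sum_{|m|\leq l}|Y^{lm}(\theta,\phi)|^2 = (2l+1)/(4\pi)$, a Cauchy--Schwarz inequality, and one-dimensional Sobolev embedding in $u$ to promote the $L^2$-type control coming from the energy to pointwise control at fixed $u$ (commuting with $\partial_t$ as well and using \eqref{waveequation2} to trade $\partial_u^2$ for $\partial_t^2$ and $\Delta_{S^2}$, in order to upgrade to $H^1_u$ regularity uniformly in $t$), one obtains the desired uniform-in-$t$ smallness of the tail. A standard $\varepsilon/2$ splitting then finishes the argument: choose $L$ so large that the tail is less than $\varepsilon/2$ uniformly in $t$, and then choose $T$ so large that the finite sum $\sum_{l\leq L}\sum_{|m|\leq l}\Psi^{lm}(t,u)Y^{lm}(\theta,\phi)$ is less than $\varepsilon/2$ for $t>T$, which is possible by Corollary~\ref{modaldecay}. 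The hardest step in the program is the promotion from the energy-type (and hence $L^2$) control over the tail to a genuinely pointwise estimate at fixed $u$, since the support of $\Psi^{lm}(t,\cdot)$ may grow in $t$; the commutation with $\partial_t$ combined with the equation is what makes this step go through.
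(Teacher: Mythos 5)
Your proposal is correct and follows essentially the same route as the paper: the paper's one-line proof also reduces the theorem to the modal decay of Corollary~\ref{modaldecay} plus the uniform summation argument of \cite{thecauchyproblemforthewaveequationintheschwarzschildgeometry}, and the argument you reconstruct (commuting $\Delta_{S^2}^k$ and $\partial_t$ with the equation, conserving the resulting higher energies of the smooth compactly supported data, and combining the addition theorem, Cauchy--Schwarz, and a one-dimensional Sobolev/fundamental-theorem-of-calculus estimate in $u$ to get a $t$-uniform tail bound) is precisely that cited argument. The only cosmetic remark is that the pointwise bound on each mode is most cleanly obtained by integrating $\partial_u\psi^{lm}$ from the vacuum region near $u=\uinf$, i.e. $|\psi^{lm}(t,u)|\leq |u|^{1/2}\|\partial_u\psi^{lm}(t,\cdot)\|_{L^2}$, which sidesteps the growing-support worry you raise without needing to trade $\partial_u^2$ for $\partial_t^2$ and $\Delta_{S^2}$.
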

The proof of this theorem then follows from the modal decay proved in $\eqref{modaldecay}$ and the argument in \cite{thecauchyproblemforthewaveequationintheschwarzschildgeometry}.
\section{Application to the EYM Equations}
\label{eym}
It was shown by Smoller, Wasserman, and Yau in \cite{existenceofblackholesolutionsfortheeymequations} that there exists infinitely many black hole solutions of the SU(2) EYM equations. These solutions correspond to a metric of the form 
\[ds^2 = -T^{-2}(r)dt^2 + A^{-1}(r)dr^2 + r^2d\Omega^2,\]
which has a singularity at some horizon radius $r=r_0>0$ (i.e. $A(r_0) = 0$) and are smooth in the region $(r_0,\infty)$. Moreover, the metric coefficients decay to unity at a rate $O(r^{-1})$ (see section 4 of \cite{existenceofblackholesolutionsfortheeymequations}) and are bounded away from zero away from the singularity. It remains then to analyze the asymptotic behavior near the singularity and the asymptotic decay of the derivatives. To that end, let us state explicitly the differential equations satisfied by $T,A$:
\begin{equation}rA' + (1+ 2w'^2)A = 1 - \frac{(1-w^2)^2}{r^2}, \label{eym1}\end{equation}
\begin{equation}2rA\left(\frac{T'}{T}\right) = \frac{(1-w^2)^2}{r^2} + (1-2w'^2)A - 1.\label{eym2}\end{equation}
There is a third equation allowing one to solve for $w$, but since we only wish to deduce asymptotics, we omit the equation and instead recall the relevant facts about $w$. 
\begin{prop}The function $w$ satisfies the following 
\begin{align}&\lim_{r\to \infty}w^2(r) = 1,\\ &\lim_{r\to \infty} rw'(r) = 0,\\ &\lim_{r\searrow r_0} w^2(r) < 1,\\ &\lim_{r\searrow r_0}|w'(r)| < \infty.\end{align}
Moreover, the following inequality also holds,
\begin{equation}\left(r_0 - \frac{(1-w^2(r_0))^2}{r_0}\right)\neq0.\label{eym3}\end{equation}\end{prop}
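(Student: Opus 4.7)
The plan is to derive each of the five assertions as a direct consequence of the Smoller--Wasserman--Yau construction of black hole solutions to the EYM equations, rather than to perform any new analysis. The existence theorem in \cite{existenceofblackholesolutionsfortheeymequations} produces solutions parametrized by horizon data, and each statement in the proposition is either part of that construction or a one-line algebraic consequence of \eqref{eym1} evaluated at $r_0$.

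For the two far-field statements, I would cite the asymptotic analysis in Section~4 of \cite{existenceofblackholesolutionsfortheeymequations}, where it is shown that the constructed black hole solutions satisfy $w(r)^2 \to 1$ and $rw'(r) \to 0$ as $r \to \infty$; the latter is equivalent to finiteness of the Yang--Mills contribution to the ADM mass, which is built into the construction. For the two horizon statements, I would appeal to the horizon boundary data that \cite{existenceofblackholesolutionsfortheeymequations} prescribes: the black hole solutions are parametrized by initial data $(w(r_0), w'(r_0))$ with $w(r_0)^2 < 1$, and $w'(r_0)$ is determined through a regularity condition at the horizon (so it is automatically finite). Both the bound $w(r_0)^2<1$ and the finiteness of $w'(r_0)$ are therefore part of the hypotheses of the existence theorem I am invoking.

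For the inequality \eqref{eym3}, I would simply evaluate \eqref{eym1} at $r = r_0$ using $A(r_0) = 0$ to obtain
\[
r_0 A'(r_0) \;=\; 1 - \frac{(1-w^2(r_0))^2}{r_0^2},
\]
and then multiply through by $r_0$ to get
\[
r_0^2 A'(r_0) \;=\; r_0 - \frac{(1-w^2(r_0))^2}{r_0}.
\]
Thus \eqref{eym3} is equivalent to $A'(r_0) \neq 0$, i.e.\ non-degeneracy of the event horizon, which is precisely the regime in which the SWY solutions are constructed (in fact $A'(r_0) > 0$ there). This also gives the strict inequality $w^2(r_0) < 1 + r_0$, consistent with part (3).

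The main obstacle here is not mathematical but one of bookkeeping: the delicate point is matching the horizon-data parametrization used in \cite{existenceofblackholesolutionsfortheeymequations} with the hypotheses \eqref{eventhorizonasymptotics1}--\eqref{farfieldasymptotics2} imposed on an SSBH earlier in this paper, so that the two notions of ``black hole'' agree and the cited asymptotics actually apply. Once that identification is in place (which reduces to checking that $T = A^{-1/2}$ and $K = A^{-1/2}$ satisfy the square-root singularity in \eqref{eventhorizonasymptotics1} using $A(r) = A'(r_0)(r-r_0) + O((r-r_0)^2)$), all five assertions drop out immediately from the references.
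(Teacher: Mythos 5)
Your proposal is correct and takes essentially the same approach as the paper: the paper's entire proof of this proposition is the single line ``For proof, we refer to \cite{existenceofblackholesolutionsfortheeymequations},'' and you likewise defer all five statements to the Smoller--Wasserman--Yau construction. Your additional algebraic reduction of \eqref{eym3} to the non-degeneracy condition $A'(r_0)\neq 0$ via evaluating \eqref{eym1} at $r_0$ (using $A(r_0)=0$ and the finiteness of $w'(r_0)$) is correct and is precisely the identity the paper itself uses immediately afterward when computing $A'(r_0)$.
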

For proof, we refer to \cite{existenceofblackholesolutionsfortheeymequations}.
Now, since $A(r)$ is smooth on $[r_0,\infty)$ and $A(r_0) = 0$, a Taylor expansion yields 
\begin{equation}A(r) = A'(r_0) \cdot (r-r_0) + O(r-r_0)^2,\label{Aequation}\end{equation}
where, from $\eqref{eym1}$, we have
\[A'(r_0) = \frac{1}{r_0^2}\left(r_0 - \frac{\left(1-w(r_0)^2\right)^2}{r_0}\right) \neq 0,\]
according to $\eqref{eym3}$. Also, $\eqref{eym2}$ gives
\begin{align}\left(\frac{T'(r)}{T(r)}\right) &= \frac{\left(1-w(r_0)^2\right)^2}{2r_0^3A'(r_0)(r-r_0)}  - \frac{1}{2rA'(r_0)(r-r_0)} +O(1)\notag\\
&=\frac{-1}{2(r-r_0)} + O(1).\label{eym4}\end{align}
Thus, 
\[\frac{d}{dr}\log(T) = \frac{d}{dr}\log\left[(r-r_0)^{-\frac{1}{2}}\right]+O(1),\]
which implies that
\[\frac{d}{dr}\log\left[T\cdot(r-r_0)^{\frac{1}{2}}\right] = O(1).\]
Integrating this from $r_0$ to $r$, we obtain
\[T(r)\cdot(r-r_0)^{\frac{1}{2}} = c_1 + O(r-r_0)\]
for some constant $c_1$, and thus that
\[T(r) = c_1 (r-r_0)^{-\frac{1}{2}} + O(r-r_0)^{\frac{1}{2}}\]
as $r\searrow r_0$. Moreover, using this in $\eqref{eym4}$ yields 
\[T'(r) = c_2(r-r_0)^{-\frac{3}{2}} + O(r-r_0)^{-\frac{1}{2}}\]
for some constant $c_2$. Note also that, in applying the results of this paper to the EYM equations, we make the identification $K^2 = A^{-1}$, and therefore, $K(r) = A^{-\frac{1}{2}}(r)$. Thus from $\eqref{Aequation}$ we find
\[K(r) = c_3 (r-r_0)^{-\frac{1}{2}} + O(1)\]
for $r$ near $r_0$. Finally, we have
\[K'(r) = -\frac{A'(r)}{2A^{\frac{3}{2}}(r)},\]
so we can write
\begin{align*}K'(r) &= -\frac{1}{2}\frac{\left(A'(r_0) + O(r-r_0)\right)}{\left(A'(r_0)(r-r_0) + O(r-r_0)^2\right)^{\frac{3}{2}}} \\
&= c_4(r-r_0)^{-\frac{3}{2}} + O(r-r_0)^{-\frac{1}{2}}\end{align*}
for some constant $c_4$.

Now for the far-field decay condition, observe that from $\eqref{eym1}$, we have
\[A'(r) = -\frac{2(w')^2}{r} + O\left(\frac{1}{r^2}\right),\]
since $A = 1 + O(r^{-1})$. From the relationship between $A$ and $K$, this implies that
\[\left(\frac{K'}{K}\right) = \frac{(w')^2}{r} + O\left(\frac{1}{r^2}\right).\]
Similarly, from $\eqref{eym2}$, we have
\[\left(\frac{T'}{T}\right) = -\frac{(w')^2}{r} + O\left(\frac{1}{r^2}\right).\]
Putting these two observations together yields
\[ \frac{T'}{T} + \frac{K'}{K} = O\left(\frac{1}{r^2}\right)\]
for $r$ tending to infinity.

Thus, black hole solutions of the EYM equations do indeed satisfy the conditions of a generalized Schwarzschild black hole and we conclude that solutions of the Cauchy problem for the wave equation in these geometries must decay according to Theorem~\ref{mainthm}.
\section{Acknowledgements}The author would like to thank his advisor Joel Smoller for introducing me to this problem, for his many helpful discussions, and for his financial support, NSF Contract No. DMS-0603754.
\bibliographystyle{plain}
\bibliography{mybibliography}
\end{document}